\newtheorem{theorem}{Theorem}
\newcounter{as}
\newcounter{le}
\newcounter{cor}[theorem]
\newcounter{def}
\newtheorem{assumption}[as]{Assumption}
\newtheorem{corollary}[cor]{Corollary}
\newtheorem{definition}[def]{Definition}
\newtheorem{lemma}[le]{Lemma}
\newtheorem{remark}[theorem]{Remark}
\newenvironment{proof}{ \textbf{Proof:} }{ \hfill $\Box$}
\newcommand{\figref}[1]{{Fig.}~\ref{#1}}
\def\bb0{{\mathbb{0}}}
\def\bb{{\mathbf{b}}}
\def\bx{{\mathbf{x}}}
\def\by{{\mathbf{y}}}
\def\b0{{\mathbf{0}}}
\def\bC{{\mathbf{C}}}
\def\bO{{\mathbf{O}}}
\def\bX{{\mathbf{X}}}
\def\bY{{\mathbf{Y}}}
\def\bbE{{\mathbb{E}}}
\def\bbP{{\mathbb{P}}}
\def\bbR{{\mathbb{R}}}
\def\cA{\mathcal{A}}
\def\sf0{{\mathsf{0}}}
\begin{document}

\title{Analysis of Blockage Effects on \\Urban Cellular Networks}

\author{
Tianyang Bai, Rahul Vaze, and Robert W. Heath, Jr.\\

\thanks{T. Bai and R. W. Heath Jr. are with The University of Texas at Austin, Austin, TX, USA.
R. Vaze is with Tata Institute of Fundamental Research, Mumbai, India (email: tybai@utexas.edu, vaze@tcs.tifr.res.in, rheath@utexas.edu). This material is based upon work supported by the National Science Foundation under Grant No. 1218338 and 1319556.

The material in this paper was presented in part at the International Conference on Signal Processing and Communications (SPCOM) \cite{bai2012}. }}

\maketitle
\begin{abstract}
Large-scale blockages like buildings affect the performance of urban cellular networks, especially at higher frequencies. Unfortunately, such blockage effects are either neglected or characterized by oversimplified models in the analysis of cellular networks. Leveraging concepts from random shape theory, this paper proposes a mathematical framework to model random blockages and analyze their impact on cellular network performance. Random buildings are modeled as a process of rectangles with random sizes and orientations whose centers form a Poisson point process on the plane. The distribution of the number of blockages in a link is proven to be Poisson random variable with parameter dependent on the length of the link. A path loss model that incorporates the blockage effects is proposed, which matches experimental trends observed in prior work. The model is applied to analyze the performance of cellular networks in urban areas with the presence of buildings, in terms of connectivity, coverage probability, and average rate. Analytic results show while buildings may block the desired signal, they may still have a positive impact on network performance since they can block significantly more interference.
\end{abstract}


\section{Introduction}
Penetration losses due to densely located buildings make it hard to predict coverage of cellular networks in urban areas. Such blockage effects become more severe in systems of higher frequencies, and may limit performance of emerging millimeter-wave cellular networks \cite{Pi2011,Anderson2004}. Blockages impact system functions like user association and mobility. For instance, a mobile user may associate with a further base station with line-of-sight transmission rather than a closer base station that is blocked. Traditionally, blockage effects are incorporated into the shadowing model, along with reflections, scattering, and diffraction \cite{Goldsmith2005}. Shadowing of different links is often modeled by using a log-normal distributed random variable, with the variance determined from measurements. Unfortunately, this approach does not capture the distance-dependence of blockage effects: intuitively speaking, the longer the link, the more buildings are likely to intersect it, hence more shadowing is likely to be experienced.

Cellular networks are becoming less regular as a variety of demand-based low power nodes are being deployed \cite{Ghosh2012}. Moreover, as urban areas are built out, even the locations of the macro and micro base stations are becoming random and less like points in a hexagonal grid \cite{Taylor2012}. Mathematical tools like stochastic geometry make it possible to analyze cellular networks with randomly located infrastructure \cite{Brown2000,Andrews2011,Dhillon2012,Heath2012,Haenggi2009}. With stochastic geometry, the locations of the infrastructure are often assumed to be distributed according to a spatial point process, usually a homogeneous Poisson point process (PPP) for tractability. In \cite{Brown2000}, a shotgun cellular system in which base stations are distributed as a PPP was shown to lower bound a hexagonal system in terms of certain performance metrics. In\cite{Andrews2011}, analytical expressions were proposed to compute average performance metrics, such as the coverage probability and the average rate, of a cellular network with PPP distributed base stations. The results of \cite{Andrews2011} were extended to multi-tier networks in \cite{Dhillon2012}. In \cite{Heath2012}, a hybrid model in which only interferers are modeled as a PPP outside a fixed-size cell was proposed to characterize the site-specific performance of cells with different sizes, rather than the aggregate performance metrics of the entire system. The stochastic geometry model was also applied to investigate the topic of network connectivity \cite{Haenggi2009}. One limitation of prior work in \cite{Brown2000, Andrews2011,Dhillon2012,Heath2012,Haenggi2009} is that penetration losses due to buildings in urban areas were not explicitly incorporated. Blockage effects were either neglected for simplicity \cite{Andrews2011,Dhillon2012,Haenggi2009} or incorporated into a log-normal shadowing random variable \cite{Brown2000,Heath2012}.

There are two popular approaches to incorporate blockages on wireless propagation. One method is using ray tracing to perform site-specific simulations \cite{Rizk1997,Schaubach1992}. Ray tracing requires accurate terrain information, such as the location and size of the blockages in the network to generate the received signal strength given a base station deployment. Ray tracing trades the complexity of numerical computation for an accurate site-specific solution. The second approach is to establish a stochastic model to characterize the statistics of blockages that provides acceptable estimation of the blockage effects with only a few parameters. An advantage of stochastic models is that it may be possible to analyze general networks. In \cite{Franceschetti1999,Marano1999,Marano2005}, urban areas were modeled as random lattices, which are made up of sites of unit squares. Each site is occupied by a blockage with some probability. It was assumed that signals reflect when impinging blockages with no power losses. Methods from percolation theory were applied to derive a closed-form expression of the propagation depth into the random lattice. The study did not take penetration losses into account. Moreover, its application is better suited for regular Manhattan-type cities where all blockages on the plane have a similar size and orientation. In \cite{Franceschetti2004}, the lattice model from \cite{Franceschetti1999} was extended by removing some of the restrictions. A constant power loss of signal was assumed when it impinged at the blockages. Orientations of blockages were also randomly distributed. Such refinements were achieved by representing blockages as nuclei from a point process, where the difference of the sizes of the blockages was not explicitly incorporated. In addition, a common limitation found in prior work is that the building height is ignored by restricting the model to the plane of $\bbR^2$. The models in \cite{Franceschetti1999,Marano1999,Marano2005} are not compatible with stochastic geometry network model that provides a convenient analysis of cellular networks \cite{Andrews2011}.

The main contribution of this paper is to propose a mathematical framework to model blockages with random sizes, locations, and orientations in cellular networks using concepts from random shape theory. We remove the restrictions on the orientations and sizes of the blockages as in prior lattice models. Specifically, random buildings in urban area are modeled as a process of rectangles with random sizes and orientations whose centers form a PPP, which is more general than the line segment process we used previously in \cite{bai2012}. We also extend the blockage model to incorporate the height of the transmitter, receiver, and buildings. The proposed framework extends to more general cases where blockages need not be rectangles but may be any convex shape. The proposed blockage model is also compatible with the PPP cellular network model to analyze network-level performance.

Based on the proposed model, we derive the distribution of power loss due to blockages in a link and apply it to analyze the performance of cellular networks with impenetrable blockages. Analytical results indicate that while buildings complicate the propagation environment by blocking line-of-sight links, they improve system performance in covered area by blocking more interference.

The main limitation of our work is that we only consider the direct propagation path, and ignore signal reflections, which are an component of wireless transmission. We defer the incorporation of reflections as a topic of future work.
We also assume that the blockages experienced by a link are independent, which neglects correlation induced by large buildings. Simulation results show that the error of such approximation is minor and acceptable. Compared with our prior work in \cite{bai2012}, in this paper we provide a general mathematical framework to model random blockages, and evaluate the network performance with blockages more precisely, which is based on the distribution rather than the moments of interference.

The paper is organized as follows. In the next section, after reviewing some concepts from random shape theory, we describe the system model where blockages are modeled as a random process of rectangles. In Section \ref{sec:quantif}, we derive the distribution of blockage number on a link, and apply it to quantify penetration losses of a link. We also extend results to incorporate height of the blockage. In Section \ref{apply}, we analyze blockage effects on a specific case of cellular networks in which blockages are impenetrable to signals, in terms of coverage probability, average achievable rate, and network connectivity. Simulation results and comparison with the prior lattice model are provided in Section \ref{sec:simu}. Conclusions are drawn in Section \ref{conclu}.

\textbf{Notation}: We use the following notation throughout this paper: bold lower-case letters $\bx$ are used to denote vector, bold upper-case letters $\bX$ are used to denote points (locations) in $n$-dimensional Euclidean space $\bbR^n$, non-bold letters $x,X$ used to denote scalar values, and caligraphic letters $\cA$ to denote sets. Using this notation, $\bX\bY$ is the segment connecting $\bX$ and $\bY$, $|\bX\bY|$ is the length of $\bX\bY$, and $V(\mathcal{A})$ is the volume of the set $\mathcal{A}\in\bbR^n$. We denote the origin of $\bbR^n$ as $\bO$. We use $\bbE$ to denote expectation, and $\bbP$ to denote probability.
\section{System Model} \label{sec:math}
In this section we described the blockage model for urban cellular networks. We first review concepts from random shape theory~\cite{Cowan1989,Stoyan1995}, which are used in our model formulation. Random shape theory is a branch of advanced geometry that formalizes random objects in space \cite{Cowan1989,Stoyan1995}. Then we describe our system model in which  randomly located buildings are represented by a process of random rectangles.


\subsection{Background on Random Shape Theory}

In this section, we will explain the concepts for $\bbR^n$. In this paper, however, we focus on the case where objects lie in the $\bbR^2$ plane. We will regard the height of a blockage as an independent mark associated with the object.

Let ${\cal S}$ be a set of objects which are closed and bounded (with finite area and perimeter) in $\bbR^n$. For example, ${\cal S}$ could be a collection of balls with different volumes in $\bbR^3$, or a combination of line segments, rectangles, ellipses on $\bbR^2$ space. For each element $s\in {\cal S}$, a point is defined to be its {\it center}. The center does not necessarily have to be the geographic center of the object: any well defined point will suffice. When an object is not symmetric in the space, it is also necessary to define the orientation of the object. The orientation is characterized by a directional unit vector originating from its center.

The concept of random object process (ROP) will be used throughout the paper. A random object process is constructed in the following way. First randomly sample objects from ${\cal S}$ and place the centers of these objects in $\bbR^n$ at points generated by a point process ${\cal P}$. Second, determine the orientation of each object according to some distribution. Consider a random object process of line segments on $\bbR^2$ as an example. First the segments are sampled from $\cal S$. Through this process, the length of the segments are chosen from some distribution and their midpoints are placed according to some point process. Then the orientation is determined as the angle between the directional vector of each line segment and the $x$-axis according to some distribution ranging on $[0,2\pi)$.

In general, an ROP is complicated, especially when correlations exist between objects or between the sampling, location, orientation of an object. In this paper, we focus on a special class of object processes known as a Boolean scheme, which satisfies the following properties.
\begin{itemize}
\item The center points form a PPP.
\item For all objects $s\in {\cal S}$, the attributes of an object, e.g. orientation, shape, and volume, are mutually independent.
\item For a specific object, its sampling, location, and orientation are also independently.
\end{itemize}

Note that the PPP property of the centers guarantees that the locations of different objects are independent. Hence the attributes of interests, such as the size, location, and orientation of the objects, are chosen independently in Boolean schemes. Such assumptions of independence provide tractability in the analysis of network models.

The Minkowski sum (also known as dilation) in Euclidean space helps us to extend results obtained in the special case of rectangle blockages to the general cases of convex objects. 
\begin{definition}
The Minkowski sum of two compact sets $\mathcal{A}$ and $\mathcal{B}$ in $\bbR^n$ is
\begin{eqnarray}
\mathcal{A}\oplus \mathcal{B}=\cup_{\bx\in\mathcal{A},\by\in\mathcal{B}}\:(\bx+\by).
\end{eqnarray}
\end{definition}
Note that for any compact set $\mathcal{A}$ and $\mathcal{B}$ in $\bbR^n$, the volume of their Minkowski sum $V(\mathcal{A}\oplus \mathcal{B})$ is finite.

\subsection{Cellular Network Model with Random Buildings} \label{sec:shadow}
We consider a random cellular network with a single tier of base stations, whose locations are determined from a Poisson point process. We use a Boolean scheme of random rectangles to model randomly located buildings. The key assumptions made in our model are summarized as follows.

\begin{assumption}[PPP Base Station]\label{assump:PPP}
The base station locations form a homogeneous PPP $\{\bX_i\}$ on the $\bbR^2$ plane with density $\mu$. A fixed transmission power $P_\mathrm{t}$ is assumed for each base station. A typical user, located at the origin, will be used for performance analysis. Denote the link from base station $\bX_i$ to the typical user as $\bO\bX_i$, and $\left|\bO\bX_i\right|=R_i$.
\end{assumption}
\begin{assumption}[Boolean Scheme Blockages]\label{assum:boolean}
Blockages are assumed to form a Boolean scheme of rectangles. The centers of the rectangles $\{\mathbf{C}_k\}$ form a homogeneous PPP of density $\lambda$. The lengths $L_k$ and widths $W_k$ of the rectangles are assumed to be i.i.d. distributed according to some probability density function $f_L(x)$ and $f_W(x)$ respectively. The orientation of the rectangles $\Theta_k$ is assumed to be uniformly distributed in $(0, 2\pi]$.
\end{assumption}

Note that by Assumption \ref{assum:boolean} and the definition of a Boolean scheme, for any fixed index $k$, it follows that $L_k$, $W_k$, and $\Theta_k$ are independent. The Boolean scheme of blockages is completely characterized by the quadruple $\left\{\mathbf{C}_k,L_k,W_k,\Theta_k\right\}$ as the object set of the buildings is defined by $\left\{L_k,W_k,\Theta_k\right\}$. Moreover, we define a location $\mathbf{Y}\in\mathbb{R}^2$ is {\it indoor} or {\it contained by a blockage} if there exists a blockage $D_k\in\left\{\mathbf{C}_k,L_k,W_k,\Theta_k\right\}$, such that $\mathbf{Y}\in D_k$.

\begin{assumption}[Independent Height]
Each blockage is marked with a height that is i.i.d. given by $H_k$ for the $k$-th blockage. Let the probability density function of $H_k$ be $f_H(x)$.
\end{assumption}

In Section \ref{sec:K}, in the first step of the analysis, we will ignore height by restricting our model to $\bbR^2$ space. Extensions to incorporate height are provided in Section \ref{height}.
\begin{assumption}[No Reflections]
The propagation mechanism of reflection is neglected.
\end{assumption}

Discussion on the reflection paths due to blockages can be found in \cite{Marano1999,Marano2005}. While a limitation, we defer detailed treatment of reflections to future work, as it greatly complicates the application of stochastic geometry for performance analysis.
\begin{assumption}[Rayleigh fading, No Noise]
Each link experiences i.i.d. small-scale Rayleigh fading and the network is interference limited, i.e. thermal noise is neglected.
\end{assumption}

It is shown in Section \ref{sec:simu} that ignoring noise power causes minor loss of accuracy in a network with dense base stations.

Let $P_i$ be the power received by the mobile user from base station $\bX_i$, $g_i$ be the small-scale fading term of link $\bO\bX_i$, which is an exponential random variable of mean 1, $\alpha$ be the exponent of path loss, which is normally between 2 and 6, and $K_i$ the number of buildings on the link $\bO\bX_i$. Define $\gamma_{ik}$ as the ratio of penetration power losses caused by the $k$-th $(0<k\le K_i)$ blockage on $\bO\bX_i$, which takes value in $[0,1]$, for the blockages attenuate signal power when ignoring reflections. Based on our assumptions,
\begin{eqnarray}
P_{i}=\frac{M g_i\:\prod_{k=0}^{K_i}\gamma_{ik}}{R_i^{\alpha}},\label{powerloss}
\end{eqnarray}
where $M$ is a constant determined by the signal frequency, antenna gains, and the transmitted power $P_\mathrm{t}$, and the reference path loss distance. Note that $\prod_{k=0}^{K_i}\gamma_{ik}$ is the penetration loss of power caused by blockages, which we aim to quantify in this paper.

In one special case, if all buildings in the networks have the same ratio of power losses, i.e $\forall i,k,\:\gamma_{ik}=\gamma$, then (\ref{powerloss}) can be simplified as
\begin{eqnarray}
P_{i}=\frac{M g_i\:\gamma^{K_i}}{R_i^{\alpha}},\label{powerloss2}
\end{eqnarray}
where $\gamma$ can be evaluated by the average power loss ratio through a building in the area.
If all buildings are impenetrable, then $\forall i,k,\:\gamma_{ik}=0$. {\footnote{ To simplify notation, in this paper we define $0^0=1$. In the impenetrable case, when $K_i=0$, it follows that $\gamma^{K_i}=0^0=1$.}} The {\it impenetrable} case is a good approximation for networks with large buildings, where there are many walls inside, or for millimeter-wave networks, where signals suffer from more severe penetration losses through solid materials \cite{Anderson2004}. Applying (\ref{powerloss2}), the signal-to-interference ratio when the mobile user is served by base station $\bX_i$ can be written as
\begin{eqnarray}
\mbox{SIR}(\bO\bX_{i})&=&\frac{g_{i}\gamma^{K_i}R_{i}^{-\alpha}}{\sum_{\ell:\ell\neq i}g_{\ell}\gamma^{K_\ell}R_{\ell}^{-\alpha}}.\label{eqn:SIR}
\end{eqnarray}
We will use (\ref{eqn:SIR}) to examine blockage effects on network performance in Section \ref{apply}.


\section{Quantification of Blockage Effects}\label{sec:quantif}
In this section, we quantify the power losses due to blockages on a given link $\bO\bX_i$ in the cellular network. Based on the system model, we derive the distribution of $S_i=\prod_{k=0}^{K_i}\gamma_{ik}$, the ratio of power losses due to blockages on $\bO\bX_i$. Towards that end, the distribution of the blockage number $K_i$ is investigated in $\bbR^2$, and thereafter generalized to incorporate height. Then we introduce a systematic method to calculate the distribution of $S_i$ in general cases using the Laplace transform. As the distribution of $S_i$ is difficult to obtain in closed form, we approximate it using a beta distribution by matching the moments. Last we analyze the effect of blockage on the link budget of a single link. We show that blockages on average introduce an additional exponential decay term into the path loss formula, which also matches the results from field experiments in prior work \cite{Marano2005}. Since we focus on the blockage effects on a single link $\bO\bX_i$, the index for link $i$, is omitted in this section.
\subsection{Distribution of the Number of Blockages}\label{sec:K}
The distribution of the number of blockages is required to quantify the effect of penetration losses.  In this section, we show that K is a Poisson distributed random variable.

We denote the point process that is formed by centers of the rectangles with lengths in $(\ell, \ell+\mathrm{d}\ell)$, widths in $(w, w+\mathrm{d}w)$, and orientations in $(\theta,  \theta+\mathrm{d}\theta)$ as $\Phi(\ell,w,\theta)$. Note that $\Phi(\ell,w,\theta)$ is a subset (partition) of the center point process $\{\bC_k\}$, and is a PPP according to the following lemma.
\begin{lemma}\label{lem:1} $\Phi(\ell, w, \theta)$ is a PPP with the density of $\lambda_{\ell,w,\theta}=\lambda f_L(\ell)\mathrm{d}\ell f_W(w)\mathrm{d}w f_\Theta(\theta)\mathrm d \theta$. If $(\ell_1,w_1,\theta_1)\ne(\ell_2,w_2,\theta_2)$, then $\Phi(\ell_1,w_1,\theta_1)$ and $\Phi(\ell_2,w_2,\theta_2)$ are independent processes.
\end{lemma}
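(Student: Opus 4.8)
The plan is to invoke the standard \emph{marking theorem} (a.k.a. the colouring or thinning property) for Poisson point processes. Here the underlying PPP is the set of building centers $\{\bC_k\}$ on $\bbR^2$ with intensity $\lambda$, and by Assumption \ref{assum:boolean} together with the definition of a Boolean scheme, each center $\bC_k$ is independently equipped with a mark $(L_k, W_k, \Theta_k)$ drawn from the product distribution $f_L \otimes f_W \otimes f_\Theta$, independently of the center locations. First I would recall the marking theorem: if $\{\bC_k\}$ is a PPP on $\bbR^2$ with intensity measure $\Lambda(\cdot) = \lambda \,|\cdot|$ and each point carries an i.i.d.\ mark in a space $E$ with law $Q$, then the marked process $\{(\bC_k,(L_k,W_k,\Theta_k))\}$ is a PPP on $\bbR^2 \times E$ with intensity measure $\Lambda \otimes Q$. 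Here $E = (0,\infty)^2 \times (0,2\pi]$ and $Q(\mathrm d\ell\,\mathrm dw\,\mathrm d\theta) = f_L(\ell)\,f_W(w)\,f_\Theta(\theta)\,\mathrm d\ell\,\mathrm dw\,\mathrm d\theta$.

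Next I would observe that $\Phi(\ell,w,\theta)$ is precisely the projection onto $\bbR^2$ of the restriction of this marked PPP to the ``slab'' $\bbR^2 \times (\ell,\ell+\mathrm d\ell)\times(w,w+\mathrm dw)\times(\theta,\theta+\mathrm d\theta)$. A restriction of a PPP to a measurable subset is again a PPP (the restriction theorem), and projecting a PPP on a product space onto one factor yields a PPP whose intensity is the pushforward (marginal) of the original intensity measure. Computing that marginal gives intensity measure $\lambda\,|\cdot| \cdot f_L(\ell)\,f_W(w)\,f_\Theta(\theta)\,\mathrm d\ell\,\mathrm dw\,\mathrm d\theta$ on $\bbR^2$, i.e.\ a homogeneous PPP of density $\lambda_{\ell,w,\theta} = \lambda f_L(\ell)\mathrm d\ell\, f_W(w)\mathrm dw\, f_\Theta(\theta)\mathrm d\theta$, as claimed. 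For the independence statement, I would use the fact that restrictions of a single PPP to \emph{disjoint} measurable sets are independent: when $(\ell_1,w_1,\theta_1)\neq(\ell_2,w_2,\theta_2)$ the corresponding infinitesimal slabs in $\bbR^2\times E$ are disjoint, so $\Phi(\ell_1,w_1,\theta_1)$ and $\Phi(\ell_2,w_2,\theta_2)$ are independent.

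The only genuine subtlety — and the step I would be most careful about — is the informal ``infinitesimal'' language: $\mathrm d\ell$, $\mathrm dw$, $\mathrm d\theta$ are not sets, so strictly one should state the result for finite Borel sets $A,B,C$, defining $\Phi(A,B,C)$ as the centers whose marks lie in $A\times B\times C$, prove $\Phi(A,B,C)$ is a PPP of intensity $\lambda\int_A f_L \int_B f_W \int_C f_\Theta$, and prove independence for disjoint $A_1\times B_1\times C_1$ and $A_2\times B_2\times C_2$. The differential notation in the statement is then shorthand for the intensity density, consistent with how $\lambda_{\ell,w,\theta}$ is later integrated against in Section \ref{sec:K}. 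Everything else is a direct citation of the marking, restriction, and superposition/independence properties of Poisson processes from \cite{Stoyan1995}, with no real computation required.
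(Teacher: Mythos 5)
Your proposal is correct and follows essentially the same route as the paper, which invokes preservation of the Poisson law under independent thinning — the marking/colouring theorem you cite is just the rigorous formulation of that same idea, and your reduction of the independence claim to restrictions of one PPP on the product space to disjoint slabs is in fact a cleaner justification than the paper's one-line "disjoint sets of points, therefore independent." Your remark about replacing the infinitesimal notation $\mathrm{d}\ell\,\mathrm{d}w\,\mathrm{d}\theta$ by finite Borel sets is the right way to make the statement precise, and is consistent with how $\lambda_{\ell,w,\theta}$ is subsequently integrated in Theorem \ref{thm:distK}.
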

\begin{proof}
From the definition of a Boolean scheme, $\{\bC_k\}$ is a PPP, and $\{L_k\}$, $\{W_k\}$, and $\{\Theta_k\}$ are sequences of i.i.d. random variables. Since the Poisson law is preserved by independent thinning \cite{Baccelli2009}, $\Phi(\ell, w, \theta)$ is a PPP. Moreover, if $(\ell_1,w_1,\theta_1)\ne(\ell_2,w_2,\theta_2)$, $\Phi(\ell_1,w_1,\theta_1)$ and $\Phi(\ell_2,w_2,\theta_2)$ are disjoint sets of points, and therefore independent processes.
\end{proof}

Define a collection of blockages as
\begin{eqnarray*}
\mathcal{B}(\ell,w, \theta)=\{(\bC_k,L_k,W_k,\Theta_k),\bC_k\in\Phi(\ell,w, \theta)\},
\end{eqnarray*}
which consists of blockages with lengths $(\ell, \ell+\mathrm{d}\ell)$, widths $(w, w+\mathrm{d}w)$, and orientations $(\theta,  \theta+\mathrm{d}\theta)$. Let $J(\ell,w,\theta)$ be the number of blockages, which belong to the subset $\mathcal{B}(\ell,w,\theta)$ and cross the link $\bO\bX$.
\begin{lemma}\label{lem:2}
$J(\ell,w,\theta)$ is a Poisson random variable with mean $\bbE[J(\ell,w,\theta)]=\lambda_{\ell,w,\theta}(R\ell|\sin\theta|+R w|\cos\theta|+\ell w)$, where $R$ is the length of the link $\bO\bX$.
\end{lemma}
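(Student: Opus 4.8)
The plan is to show that $J(\ell,w,\theta)$ is obtained from the PPP $\Phi(\ell,w,\theta)$ by an independent thinning, and then compute the retention probability (equivalently, the expected number of retained points) via Campbell's theorem. By Lemma~\ref{lem:1}, $\Phi(\ell,w,\theta)$ is a homogeneous PPP on $\bbR^2$ with intensity $\lambda_{\ell,w,\theta}$. A blockage with center $\bC_k \in \Phi(\ell,w,\theta)$ crosses the segment $\bO\bX$ if and only if the rectangle $D_k$ of fixed dimensions $\ell \times w$ and fixed orientation $\theta$, centered at $\bC_k$, intersects $\bO\bX$. Since here $\ell$, $w$, $\theta$ are fixed (all blockages in $\mathcal{B}(\ell,w,\theta)$ share them), whether $D_k$ intersects $\bO\bX$ depends \emph{only} on the center location $\bC_k$. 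Therefore $J(\ell,w,\theta)$ counts exactly the points of $\Phi(\ell,w,\theta)$ that fall in the region
\begin{eqnarray*}
\mathcal{R} = \{\bC : \text{rectangle of size } \ell\times w \text{, orientation } \theta \text{, centered at } \bC, \text{ intersects } \bO\bX\}.
\end{eqnarray*}
This is a deterministic region, so $J(\ell,w,\theta)$ is a Poisson random variable with mean $\lambda_{\ell,w,\theta}\, V(\mathcal{R})$, where $V(\mathcal{R})$ is the area of $\mathcal{R}$.

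The main work is then the geometric computation of $V(\mathcal{R})$. The key observation is that the rectangle centered at $\bC$ intersects the segment $\bO\bX$ if and only if $\bC$ lies in the Minkowski sum of $\bO\bX$ with the reflected rectangle $-\mathcal{D}$, where $\mathcal{D}$ is the $\ell\times w$ rectangle of orientation $\theta$ centered at the origin; since a rectangle is centrally symmetric, $-\mathcal{D} = \mathcal{D}$, so $\mathcal{R} = \bO\bX \oplus \mathcal{D}$. The region $\bO\bX \oplus \mathcal{D}$ is a hexagon (a rectangle swept along the segment): its area equals the area of the rectangle $\ell w$ plus the ``sweep'' contribution, which is the length $R$ of the segment times the width of the projection of $\mathcal{D}$ onto the direction perpendicular to $\bO\bX$. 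Taking $\bO\bX$ along a fixed axis, the rectangle $\mathcal{D}$ has orientation $\theta$ relative to that axis, and its extent perpendicular to the segment is $\ell|\sin\theta| + w|\cos\theta|$. Hence
\begin{eqnarray*}
V(\mathcal{R}) = \ell w + R(\ell|\sin\theta| + w|\cos\theta|),
\end{eqnarray*}
which gives $\bbE[J(\ell,w,\theta)] = \lambda_{\ell,w,\theta}\big(R\ell|\sin\theta| + Rw|\cos\theta| + \ell w\big)$ as claimed.

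I would carry out the steps in this order: (i) recall that $\Phi(\ell,w,\theta)$ is a PPP with intensity $\lambda_{\ell,w,\theta}$ from Lemma~\ref{lem:1}; (ii) observe that the crossing event for each point depends only on the center, so $J$ is the count of $\Phi(\ell,w,\theta)$ in the fixed region $\mathcal{R} = \bO\bX \oplus \mathcal{D}$, hence Poisson with mean $\lambda_{\ell,w,\theta} V(\mathcal{R})$ (a restriction of a PPP to a fixed Borel set is Poisson); (iii) identify $\mathcal{R}$ with the Minkowski sum using central symmetry of the rectangle; (iv) compute $V(\bO\bX \oplus \mathcal{D})$ by decomposing the hexagon into the original rectangle plus two congruent pieces whose combined area is $R\cdot(\text{width of }\mathcal{D}\text{ perpendicular to }\bO\bX)$, and evaluate that width as $\ell|\sin\theta|+w|\cos\theta|$.

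The main obstacle is step (iv): the area computation of the Minkowski sum of a segment with an arbitrarily oriented rectangle. This requires care with the orientation $\theta$ (in particular the absolute values, which arise because the perpendicular extent of the rectangle is $|\ell\sin\theta|+|w\cos\theta|$ regardless of the sign of $\theta$), and with edge cases where the segment is parallel to a side of the rectangle (so that the hexagon degenerates to a parallelogram of the same area — the formula still holds by continuity). Everything else is a routine application of the thinning/restriction property of Poisson processes. A minor point worth stating explicitly is that $\ell, w, \theta$ being infinitesimal ranges rather than exact values does not affect the argument: to first order in $\mathrm d\ell\,\mathrm d w\,\mathrm d\theta$ the dimensions and orientation are constant across $\mathcal{B}(\ell,w,\theta)$, and the intensity $\lambda_{\ell,w,\theta}$ already carries the differential factors.
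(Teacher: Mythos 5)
Your proposal is correct and follows essentially the same route as the paper: the paper also identifies the set of admissible centers as a fixed region (its hexagon $PQSTUV$, which is exactly $\bO\bX\oplus\mathcal{D}$, as the paper itself notes in Remark~\ref{remk:1}), invokes the Poisson count of $\Phi(\ell,w,\theta)$ in that region, and computes the same area $R\ell|\sin\theta|+Rw|\cos\theta|+\ell w$ by decomposing the hexagon into a parallelogram plus two right triangles. Your ``segment length times perpendicular width plus rectangle area'' evaluation is just a cleaner phrasing of that same geometric computation.
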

\proof
As shown in Fig. \ref{Miniskisum}, a rectangle from $\mathcal{B}(\ell,w,\theta)$ intersects the link $\bO\bX$ if and only if its center falls in the region $PQSTUV$. Hence $J(\ell,w,\theta)$ equals the number of points of $\Phi(\ell,w,\theta)$ falling in the region $PQSTUV$. Let the area of region $PQSTUV$ be $S(\ell,w,\theta)$, then
\begin{eqnarray*}
S(\ell,w,\theta)&=&R\:|\sin(\phi+\theta)|\sqrt{w^2+\ell^2}+w\ell\\
&=&R\left(|\sin(\theta)|\frac{\ell}{\sqrt{w^2+\ell^2}}+|\cos(\theta)|\frac{w}{\sqrt{w^2+\ell^2}}\right)\sqrt{w^2+\ell^2}+w\ell\\
&=&R\ell|\sin\theta|+Rw|\cos\theta|+\ell w.
\end{eqnarray*}
By Lemma \ref{lem:1}, $\Phi(\ell,w,\theta)$ is a PPP of density $\lambda_{\ell,w,\theta}$. The number of points of $\Phi(\ell,w,\theta)$ falling in the region $PQSTUV$ is a Poisson variable with mean $\lambda_{\ell,w,\theta}\:S(\ell,w,\theta)$. Consequently, $J(\ell,w,\theta)$ is a Poisson variable with mean
\begin{align*}
\bbE[J(\ell,w,\theta)]&=\lambda_{\ell,w,\theta}\:S(\ell,w,\theta)\\
&=\lambda_{\ell,w,\theta}\times (R\ell|\sin\theta|+Rw|\cos\theta|+\ell w). \tag*{\endproof}
\end{align*}
\begin{figure} [!ht]
\centerline{
\includegraphics[width=0.5\columnwidth]{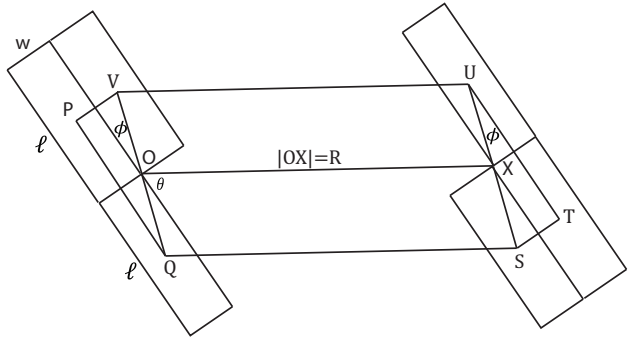}
   }
\caption{$OX$ is the link of distance $R$. $P$, $Q$, $S$, $T$, $U$, and $V$ are the centers of the corresponding rectangles. A rectangle of $B(\ell,w,\theta)$ intersects $OX$ if only if its center falls in the region $PQSTUV$, which is made up of parallelogram $QSUV$ and right triangles $PQV$ and $TSU$.}
\label{Miniskisum}
\end{figure}
Next, recall that $K$ is the total number of blockages crossing the link $\bO\bX$. We calculate the distribution of $K$ in the following theorem.
\begin{theorem} \label{thm:distK}
$K$ is a Poisson random variable with the mean $\beta R+p$, where $\beta=\frac{2\lambda(\bbE[W]+\bbE[L])}{\pi}$, and $p=\lambda\bbE[L]\bbE[W]$.
\end{theorem}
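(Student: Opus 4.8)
The plan is to obtain $K$ by ``summing up'' the independent Poisson contributions $J(\ell,w,\theta)$ over all lengths, widths, and orientations, and then to superimpose these into a single Poisson random variable whose mean is the integral of the individual means. The key structural fact is that the family $\{\Phi(\ell,w,\theta)\}$, indexed by the continuum of parameter triples, consists of mutually independent PPPs by Lemma~\ref{lem:1}, and the blockages crossing $\bO\bX$ are obtained by independently thinning each of these. By the superposition theorem for Poisson point processes, the union over all $(\ell,w,\theta)$ of the thinned processes is again a PPP, so the total count $K$ of blockages crossing $\bO\bX$ is a Poisson random variable. Its mean is obtained by integrating the mean of $J(\ell,w,\theta)$ from Lemma~\ref{lem:2} against the mark distributions:
\begin{eqnarray*}
\bbE[K]=\int_0^{2\pi}\!\!\int_0^\infty\!\!\int_0^\infty \lambda f_L(\ell)f_W(w)f_\Theta(\theta)\,\bigl(R\ell|\sin\theta|+Rw|\cos\theta|+\ell w\bigr)\,\mathrm d\ell\,\mathrm d w\,\mathrm d\theta.
\end{eqnarray*}

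Next I would evaluate this integral by splitting it into three terms using linearity. Since $f_\Theta(\theta)=\frac{1}{2\pi}$ on $(0,2\pi]$, we have $\int_0^{2\pi}|\sin\theta|\,\frac{\mathrm d\theta}{2\pi}=\int_0^{2\pi}|\cos\theta|\,\frac{\mathrm d\theta}{2\pi}=\frac{2}{\pi}$, while the factor $\ell$ (resp.\ $w$) integrated against $f_L$ (resp.\ $f_W$) gives $\bbE[L]$ (resp.\ $\bbE[W]$), with the complementary mark integrating to $1$. The first term thus contributes $\frac{2}{\pi}\lambda R\,\bbE[L]$, the second $\frac{2}{\pi}\lambda R\,\bbE[W]$, and the third term, where the $\theta$-integral is simply $1$, contributes $\lambda\,\bbE[L]\,\bbE[W]$. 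Collecting terms yields $\bbE[K]=\frac{2\lambda(\bbE[L]+\bbE[W])}{\pi}R+\lambda\bbE[L]\bbE[W]=\beta R+p$, as claimed.

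The main obstacle I anticipate is making the ``integral superposition'' argument rigorous: Lemmas~\ref{lem:1} and~\ref{lem:2} are phrased with the infinitesimal notation $\mathrm d\ell$, $\mathrm dw$, $\mathrm d\theta$, so strictly speaking the clean statement is about a partition of the mark space into finitely many cells, for which superposition of independent Poissons is elementary, followed by a limiting/measure-theoretic passage to the full continuum (or, equivalently, an appeal to the marking theorem: marking the PPP $\{\bC_k\}$ with independent marks $(L_k,W_k,\Theta_k)$ and then thinning by the event ``the rectangle crosses $\bO\bX$'' yields a PPP on the product space, whose total mass is Poisson with the stated mean). I would therefore present the proof as: (i) invoke the marking theorem so that $\{(\bC_k,L_k,W_k,\Theta_k)\}$ is a PPP on $\bbR^2\times(0,\infty)^2\times(0,2\pi]$; (ii) the blockages crossing $\bO\bX$ form an independent thinning of this process, hence a PPP, so $K$ is Poisson; (iii) compute $\bbE[K]$ as the above integral, which by Fubini and Lemma~\ref{lem:2} equals $\int \bbE[J(\ell,w,\theta)]$; (iv) carry out the three elementary one-dimensional integrals above. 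A minor point to check is that $\beta R + p < \infty$, which holds whenever $\bbE[L]$ and $\bbE[W]$ are finite, a standing assumption implicit in the model.
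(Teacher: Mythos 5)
Your proposal is correct and follows essentially the same route as the paper: decompose $K$ as the superposition of the independent Poisson counts $J(\ell,w,\theta)$ from Lemmas~\ref{lem:1} and~\ref{lem:2}, conclude $K$ is Poisson, and integrate the means against the mark distributions to get $\beta R+p$. Your appeal to the marking theorem to make the continuum-indexed superposition rigorous is a welcome tightening of the paper's infinitesimal-notation argument, but it is the same underlying idea, not a different proof.
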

\proof
By Lemma \ref{lem:1} and Lemma \ref{lem:2}, $J(\ell,w,\theta)$ are independent Poisson random variables for different values of the tuple $(\ell,w,\theta)$. Note that for any realization of the blockage distribution, $K=\sum_{\ell,w,\theta}J(\ell,w,\theta)$ always holds. Since superpositions of independent Poisson random variables are still Poisson, $K$ is Poisson distributed. Its expectation can be computed as
\begin{align*}
\bbE[K]&=\sum_{\ell,w,\theta} K(\ell,w,\theta)\\
&=\int_{L}\int_{W}\int_{\Theta}\lambda\left(R\ell|\sin\theta|+R w|\cos\theta|+\ell w\right)\:f_L(\ell)\mathrm{d}\ell\: f_W(w)\mathrm{d}w\: \frac{1}{2\pi}\mathrm d \theta\\
&=\frac{2\lambda (\bbE[L]+\bbE[W])}{\pi}R+\lambda\bbE[L]\bbE[W]\\
&=\beta R+p. \tag*{\mbox{\endproof}}
\end{align*}
According to Theorem \ref{thm:distK}, the average number of blockages on a link is proportional to the length of the link, which matches the intuition that the longer the link is, the more blockages are likely to appear on that link. Also, when $W\equiv0$, i.e using line segments instead of rectangles to describe blockages, $\bbE[K]=\frac{2\lambda\bbE[L]R}{\pi}$, which matches our previous results in \cite{bai2012}.


The probability of line-of-sight propagation through a link can be evaluated through the following corollary. Our results on line-of-sight probability match the models used in 3GPP standards \cite{3GPPTR36.8142010}, in which the line-of-sight probability also decays exponentially as the distance increases, and typical values of $\beta$ in different scenarios are selected via measurements. 
\begin{corollary}\label{cor:free}
The probability that a link of length $R$ admits line-of-sight propagation, i.e no blockages cross the link, is
$\bbP(K=0)=\mathrm{e}^{-(\beta R+p)}$.
\end{corollary}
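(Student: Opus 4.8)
The plan is to read this off directly from Theorem~\ref{thm:distK}. That theorem already establishes that $K$ is a Poisson random variable with mean $\bbE[K]=\beta R+p$, where $\beta=\tfrac{2\lambda(\bbE[W]+\bbE[L])}{\pi}$ and $p=\lambda\bbE[L]\bbE[W]$. Since a link admits line-of-sight propagation precisely when no blockage crosses it, the event of interest is exactly $\{K=0\}$, so it suffices to evaluate the Poisson probability mass function at $0$. For a Poisson variable with mean $m$ one has $\bbP(K=0)=\mathrm{e}^{-m}\,m^{0}/0!=\mathrm{e}^{-m}$; substituting $m=\beta R+p$ gives $\bbP(K=0)=\mathrm{e}^{-(\beta R+p)}$, which is the claimed expression.

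An equivalent route, if one prefers not to quote the aggregated form of Theorem~\ref{thm:distK}, is to build the event $\{K=0\}$ from the per-type counts. No blockage crosses $\bO\bX$ if and only if $J(\ell,w,\theta)=0$ for every tuple $(\ell,w,\theta)$. By Lemma~\ref{lem:1} and Lemma~\ref{lem:2} these counts are independent Poisson variables, so $\bbP(K=0)=\prod_{\ell,w,\theta}\bbP\big(J(\ell,w,\theta)=0\big)=\prod_{\ell,w,\theta}\mathrm{e}^{-\bbE[J(\ell,w,\theta)]}=\mathrm{e}^{-\sum_{\ell,w,\theta}\bbE[J(\ell,w,\theta)]}=\mathrm{e}^{-\bbE[K]}=\mathrm{e}^{-(\beta R+p)}$. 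This is the same computation carried out in the proof of Theorem~\ref{thm:distK}, now specialized to the empty-count event.

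There is essentially no obstacle here: the statement is a one-line corollary of the Poisson characterization of $K$. The only point requiring a little care — and it is inherited verbatim from the proof of Theorem~\ref{thm:distK} rather than new to this corollary — is that the formal manipulation of the ``sum/product over $(\ell,w,\theta)$'' is shorthand for an integral against the intensity measure with infinitesimal densities $\lambda f_L(\ell)\,\mathrm{d}\ell\, f_W(w)\,\mathrm{d}w\, f_\Theta(\theta)\,\mathrm{d}\theta$; the rigorous version is simply the void probability of the relevant PPP evaluated on the swept region, whose mean measure integrates to $\beta R+p$. Given Theorem~\ref{thm:distK}, I would present only the first, two-line argument.
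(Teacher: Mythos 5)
Your proof is correct and is exactly the argument the paper intends: the corollary is an immediate consequence of Theorem~\ref{thm:distK}, obtained by evaluating the Poisson probability mass function at zero with mean $\beta R+p$. The alternative void-probability route you sketch is just the same computation unpacked, so nothing further is needed.
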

We can also evaluate the probability that a user is located inside a building in the following lemma.
\begin{corollary}\label{cor:pointcover}
The probability that a location in $\mathbb{R}^2$ is contained by a blockage is
$1-\mathrm{e}^{-p}\approx p$. 
\end{corollary}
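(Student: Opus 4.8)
The plan is to recognize that a location being contained by a blockage is exactly the event that a degenerate link of length $R=0$ is crossed by a blockage, so the claim follows by specializing the argument behind Theorem \ref{thm:distK}. By stationarity of the Boolean scheme (the centers $\{\bC_k\}$ form a homogeneous PPP and the marks $L_k,W_k,\Theta_k,H_k$ are i.i.d.), it suffices to compute the probability for the origin $\bO$.

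First I would observe, just as in the proof of Lemma \ref{lem:2}, that a rectangle from the class $\mathcal{B}(\ell,w,\theta)$ contains $\bO$ if and only if its center lies in the congruent rectangle of side lengths $\ell,w$ and orientation $\theta$ centered at $\bO$; this uses the central symmetry of a rectangle about its center. That region has area $\ell w$, which is precisely what the area formula $S(\ell,w,\theta)=R\ell|\sin\theta|+Rw|\cos\theta|+\ell w$ of Lemma \ref{lem:2} gives when $R=0$. Hence, by Lemma \ref{lem:1}, the number of blockages from $\mathcal{B}(\ell,w,\theta)$ covering $\bO$ is a Poisson random variable with mean $\lambda_{\ell,w,\theta}\,\ell w$.

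Next, summing over the independent classes $\Phi(\ell,w,\theta)$ exactly as in the proof of Theorem \ref{thm:distK}, the total number $N$ of blockages containing $\bO$ is a superposition of independent Poisson variables, hence Poisson, with mean
\begin{equation*}
\bbE[N]=\int_{L}\int_{W}\int_{\Theta}\lambda\,\ell w\, f_L(\ell)\,\mathrm d\ell\, f_W(w)\,\mathrm d w\,\tfrac{1}{2\pi}\mathrm d\theta=\lambda\,\bbE[L]\,\bbE[W]=p.
\end{equation*}
Therefore the probability that $\bO$ lies in at least one blockage is $\bbP(N\ge 1)=1-\mathrm e^{-p}$, and the Taylor expansion $1-\mathrm e^{-p}=p-p^2/2+\cdots$ yields $1-\mathrm e^{-p}\approx p$ for small $p$.

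There is no serious obstacle: the only step needing a word of justification is the reduction ``a point is covered $\iff$ a length-$0$ link is crossed,'' i.e. that the geometry of Lemma \ref{lem:2} may be reused with $R=0$ (equivalently, that the region $PQSTUV$ degenerates to the rectangle itself). Everything else is the same independent-thinning and Poisson-superposition argument already carried out for Theorem \ref{thm:distK}.
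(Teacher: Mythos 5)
Your proposal is correct and follows essentially the same route as the paper: the paper simply invokes Corollary \ref{cor:free} with $R=0$ (so the no-blockage probability is $\mathrm{e}^{-p}$) and uses $1-\mathrm{e}^{-p}\approx p$ for small $p$, whereas you unpack that same specialization by rerunning the Lemma \ref{lem:2}/Theorem \ref{thm:distK} argument for a degenerate link. The extra detail is sound but adds nothing beyond the paper's one-line reduction.
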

\begin{proof}
This follows immediately from Corollary \ref{cor:free} by taking $R=0$. The approximation is based on the fact that $\lim_{p\to0}\frac{1-\mathrm{e}^{-p}}{p}=1.$ 
\end{proof}
Therefore, $p$ is an approximation of the fraction of the land covered by blockages in the investigated area. The approximation error is caused by neglecting the overlapping of blockages, which does exist in the Boolean scheme model. If buildings are not allowed to overlap, $p=\lambda\bbE[L]\bbE[W]$ becomes the exact evaluation of blockage coverage. Intuitively speaking, with small $p$, which indicates the blockages are sparsely distributed, blockages are unlikely to overlap. Therefore the error due to overlap is negligible. This provides a way to estimate the parameters of the model
based on the actual distribution of blockages. For example, the value of $p$ in an area can be roughly estimated using Google maps by dividing the sum area of all the buildings divided by the total area. $\bbE[L]$ and $\bbE[W]$ can be evaluated by the average size of the buildings in the area. Given the value of $p$, $\bbE[L]$, and $\bbE[W]$, $\beta$ can be evaluated by Theorem \ref{thm:distK}. 
\begin{remark}\label{remk:1}
Note the fact that the area $S(\ell,w,\theta)=V\left(\bO\bX\oplus D_{\ell,w,\theta}\right)$, where $D_{\ell,w,\theta}$ is an element of $\mathcal{B}(\ell,w,\theta)$, i.e a rectangle of length $\ell$, width $w$, and orientation $\theta$. Hence Lemma \ref{lem:2} can be rewritten as
$\bbE[J(\ell,w,\theta)]=\lambda_{\ell,w,\theta}V\left(OX\oplus D_{\ell,w,\theta}\right)$. Similarly, the result in Theorem \ref{thm:distK} can be also rewritten using Minkowski addition as
$\bbE[K]=\lambda\bbE\left[V\left(\bO\bX \oplus D\right)\right],$
where $D\in\{L_k,W_k,\Theta_k\}$ is a typical element from the object set of the rectangle Boolean scheme.
\end{remark}

Inspired by Remark \ref{remk:1}, we use the Minkowski sum to extend the results for rectangle blockages in Theorem \ref{thm:distK} to more general cases, where blockages can have any compact and convex shapes.
\begin{theorem}\label{thm:general}
Let $\mathcal{S}$ be the object set of the blockages, which can be made up of any compact and convex sets in $\bbR^2$. The number of blockages crossing a link $\bO\bX$ of length $R$ is still a Poisson random variable with mean $\mathbb{E}[K]=\lambda \bbE[V(\bO\bX\oplus D)]$, where $D\in\mathcal{S}$.
\end{theorem}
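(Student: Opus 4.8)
The plan is to mirror the proof of Theorem~\ref{thm:distK} step for step, replacing the explicit rectangle geometry of Fig.~\ref{Miniskisum} by the Minkowski-sum characterization already isolated in Remark~\ref{remk:1}. Fix the link $\bO\bX$ of length $R$. For a typical blockage write $D$ for its body translated so that its center sits at the origin, so that a blockage with center $\bC_k$ occupies $\bC_k + D$; here $D$ is a random compact convex set whose law is the shape mark drawn from $\cS$ together with the uniform orientation on $(0,2\pi]$. The object to analyze is the marked PPP of centers carrying these bodies, and the functional ``number of points whose body meets $\bO\bX$''.

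First I would reprove the analogue of Lemma~\ref{lem:1}: partition the Boolean scheme according to the mark, i.e.\ shape/orientation type $(s,\theta)$. The sub-collection of centers whose associated body is (infinitesimally) of type $(s,\theta)$ is an independent thinning of the PPP $\{\bC_k\}$, hence again a PPP, with intensity $\lambda$ times the corresponding shape--orientation measure; and distinct types give disjoint, hence independent, point processes. This is verbatim Lemma~\ref{lem:1} with the index $(\ell,w,\theta)$ replaced by $(s,\theta)$ ranging over $\cS\times(0,2\pi]$.

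Second comes the geometric core, replacing Lemma~\ref{lem:2}: a body $\bC_k+D_{s,\theta}$ meets the segment $\bO\bX$ if and only if $\bC_k\in \bO\bX\oplus(-D_{s,\theta})$. Convexity and compactness of $D_{s,\theta}$ make this region bounded with finite area, and reflection through the center preserves area, so $V(\bO\bX\oplus(-D_{s,\theta}))=V(\bO\bX\oplus D_{s,\theta})$ (equivalently, since the orientation is uniform, $-D$ and $D$ are equidistributed). Hence the count $J(s,\theta)$ of type-$(s,\theta)$ blockages crossing $\bO\bX$ is the number of points of the corresponding thinned PPP in a region of area $V(\bO\bX\oplus D_{s,\theta})$, which is Poisson with mean equal to the thinned intensity times that area. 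Finally, $K$ is the superposition $\sum_{(s,\theta)}J(s,\theta)$ of these independent Poisson variables, hence Poisson, and integrating the means against the shape--orientation law yields $\bbE[K]=\lambda\,\bbE[V(\bO\bX\oplus D)]$ with $D$ a typical element of the (oriented) object set.

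The main obstacle is making the ``sum over all marks'' rigorous rather than heuristic, as it already is in the proof of Theorem~\ref{thm:distK}: the identity $K=\sum_{(s,\theta)}J(s,\theta)$ and the superposition claim are cleanest stated as properties of the PPP on the product space $\bbR^2\times\cS\times(0,2\pi]$ (the centers-with-bodies marked PPP) together with the linear counting functional ``number of points whose body intersects $\bO\bX$''; the Poisson distribution of that functional and its mean $\lambda\int V(\bO\bX\oplus D_{s,\theta})$ against the mark measure then follow from the mapping/Campbell theorem for Poisson processes. One also needs the mild integrability hypothesis $\bbE[V(\bO\bX\oplus D)]<\infty$ (automatic if, say, $\cS$ has bounded diameter, or if $D$ has finite expected area and finite expected width), without which $K$ would fail to have finite mean; I would state this explicitly.
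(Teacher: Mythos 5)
Your proposal is correct and is exactly the argument the paper intends: the paper omits the proof, stating only that it is ``similar to that of Theorem~\ref{thm:distK},'' and your write-up is precisely that adaptation --- mark-wise thinning (Lemma~\ref{lem:1}), the Minkowski-sum intersection criterion $\bC_k\in\bO\bX\oplus(-D)$ in place of the explicit rectangle region of Lemma~\ref{lem:2}, and superposition of independent Poisson counts. Your added remarks on formalizing the sum over marks via the marked-PPP/Campbell formulation and on the integrability hypothesis $\bbE[V(\bO\bX\oplus D)]<\infty$ are sensible refinements of details the paper leaves implicit.
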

\begin{proof}
The proof is similar to that of Theorem \ref{thm:distK}, and is omitted.
\end{proof}

\subsection{Incorporating Building Height}\label{height}
In this section we extend the system model to incorporate height. For a link $\bO\bX$ of length $R$ in $\bbR^2$, $H_{\mathrm{B}}$ is the height of the base station, $H_{\mathrm{U}}$ is the height of the mobile user. Without loss of generality, assume $H_\mathrm{B}>H_\mathrm{U}$. The height of $k$-th blockage $H_k$ is i.i.d. according to some probability density function $f_H(x)$, and independent of $\{\bC_k,L_k,\Theta_k\}$. For simplicity, we assume $W\equiv0$ in this part, i.e. use line segments process to describe random buildings. The case of rectangle process can also be  extended to incorporate building height in a similar way, however, the expressions will be more complicated.

Denote $\hat{K}$ as the number of blockages that effectively block the direct propagation of the link $\bO\bX$ when considering height of blockages. Note that even if the projection of a building on the ground crosses $\bO\bX$, in practice it might not be tall enough to actually block the link as indicated in Fig. \ref{fig:height}.

\begin{figure} [!ht]
\centerline{
\includegraphics[width=0.5\columnwidth]{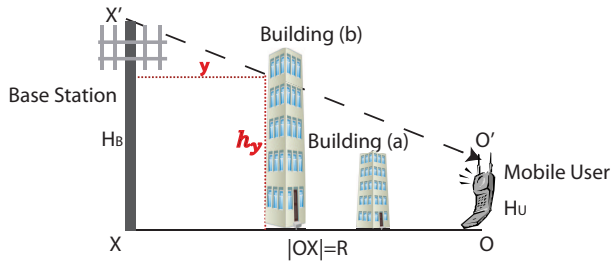}
   }
\caption{The transmitter locating at $\bX$ has a height of $H_t$, while the mobile receiver has a height of $H_r$. Not all buildings which cross $\bO\bX$ blockage the actual propagation path $\bO'\bX'$ in $\bbR^3$, such as building $(a)$ in the figure. If a building intersecting $\bO\bX$ at a point $y$ away from the transmitter $\bX$ effectively blocks $\bO'\bX'$ if and only if its height is larger than $h_y$ as building $(b)$ in the figure. }
\label{fig:height}
\end{figure}
\begin{theorem}
Considering height, the number of effective blockages of a link of length $R$, denoted by $\hat{K}$, is a Poisson random variable with $\bbE[\hat{K}]=\eta\bbE[K]$, where $\bbE[K]=\frac{2\lambda\bbE[L]R}{\pi}$, and
$$\eta=1-\int_{0}^{1}\int_{0}^{sH_\mathrm{B}+(1-s)H_\mathrm{U}}f_H(h)\mathrm{d}h\mathrm{d}s.$$
$\eta$ can be interpreted as the probability that given a building crossing $\bO\bX$, it also blocks $\bO\rq \bX\rq$ as shown in Fig. \ref{fig:height}.
\end{theorem}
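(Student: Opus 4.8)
The plan is to reduce the height-aware count $\hat K$ to a height-marked thinning of the planar blockage process, and then apply Theorem~\ref{thm:general} (or rather the line-segment specialization with $W\equiv0$). First I would set up the geometry of Fig.~\ref{fig:height}: parametrize a point on the ground segment $\bO\bX$ by $y\in[0,R]$, or equivalently by $s=y/R\in[0,1]$, and observe that the straight line in $\bbR^3$ connecting the user at height $H_\mathrm{U}$ (above $\bO$) and the base station at height $H_\mathrm{B}$ (above $\bX$) sits at height $h_y = (1-s)H_\mathrm{U}+sH_\mathrm{B}$ above the point indexed by $s$. A building whose footprint (a line segment) crosses $\bO\bX$ at parameter $s$ effectively blocks the $\bbR^3$ link if and only if its height exceeds $h_y$; this is the content of the figure and I would state it as the defining reduction.

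Next I would exploit the Boolean-scheme independence: the height marks $\{H_k\}$ are i.i.d.\ with density $f_H$ and independent of $\{\bC_k,L_k,\Theta_k\}$. So $\hat K$ is obtained from the planar crossing process by an independent marking/thinning in which a crossing building located at ground-parameter $s$ is retained with probability $\bbP(H_k > h_s) = 1 - \int_0^{(1-s)H_\mathrm{U}+sH_\mathrm{B}} f_H(h)\,\mathrm dh$. Since independent thinning (with retention probability depending only on location, not on other points) of a Poisson process is Poisson, $\hat K$ is Poisson. The remaining task is to compute its mean. Here I would use the Campbell/refined-Mecke-style argument already implicit in the proof of Lemma~\ref{lem:2} and Theorem~\ref{thm:distK}: conditioned on a segment of length $\ell$, orientation $\theta$ crossing $\bO\bX$, the crossing point is essentially uniform along the link (the intersection density is translation-invariant along $\bO\bX$ in the relevant Minkowski region), so the probability that a \emph{typical} crossing building at a uniformly random parameter $s\in[0,1]$ is tall enough is $\int_0^1\big(1-\int_0^{sH_\mathrm{B}+(1-s)H_\mathrm{U}}f_H(h)\,\mathrm dh\big)\mathrm ds = \eta$. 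Taking expectations (and using independence of $H_k$ from $\ell,\theta$) gives $\bbE[\hat K] = \eta\,\bbE[K]$ with $\bbE[K]=\frac{2\lambda\bbE[L]R}{\pi}$ from Theorem~\ref{thm:distK} specialized to $W\equiv0$.

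The interpretation of $\eta$ as ``the probability that, given a building crosses $\bO\bX$, it also blocks $\bO'\bX'$'' then falls out directly from this computation, since it is exactly the retention probability averaged over the uniform location of the crossing.

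I expect the main obstacle to be the rigorous justification that the ground crossing-point of a building, conditioned on the building crossing $\bO\bX$, is uniformly distributed along $[0,R]$ (equivalently, that one may integrate the retention probability against the uniform measure in $s$). Intuitively this is clear from the translation-invariance of the PPP of centers along the direction of $\bO\bX$ within the Minkowski region $PQSTUV$ of Fig.~\ref{Miniskisum}, but making it precise requires either decomposing the link into infinitesimal sub-segments $(y,y+\mathrm dy)$, applying Lemma~\ref{lem:2} to each with the additional height-thinning factor $\bbP(H>h_y)$, and summing/integrating (superposition of independent Poissons again), or invoking Campbell's theorem for the marked point process $\{(\bC_k,L_k,\Theta_k,H_k)\}$ directly. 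I would carry it out via the sub-segment decomposition, since that mirrors the existing proofs of Lemma~\ref{lem:2} and Theorem~\ref{thm:distK} and keeps the argument self-contained; the boundary correction term $\ell w$ (the ``$+p$'' type term) vanishes here because $W\equiv0$, which is presumably why the theorem statement omits the constant.
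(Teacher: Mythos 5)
Your proposal is correct and follows essentially the same route as the paper: the same geometric reduction to the height threshold $h_y = sH_\mathrm{B}+(1-s)H_\mathrm{U}$, the same appeal to the uniform distribution of the crossing point along the link to get $\eta$, and the same independent-thinning argument to conclude that $\hat{K}$ is Poisson with mean $\eta\,\bbE[K]$. Your added care about the thinning being position-dependent, and your plan to justify the uniformity of the crossing point via a sub-segment decomposition, merely make rigorous a step the paper asserts without proof.
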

\begin{proof}
Consider a building intersecting the link $\bO\bX$ at the point which is at a horizontal distance $y$ away from $\bX$. As shown in Fig. \ref{fig:height}, the building blocks the direct propagation path $O\rq X\rq$ only if its height $h>h_y$, where $h_y$ can be computed as
\begin{eqnarray}
h_y&=&\frac{yH_\mathrm{B}+(R-y)H_\mathrm{U}}{R}.\label{eqn3dh}
\end{eqnarray}
Next given that a building intersects the link $\bO\bX$, the intersection is uniformly distributed across the link, which indicates $y$ is uniformly distributed on $[0,\:R]$. Hence given a building intersects $\bO\bX$, the probability it blocks $\bO\rq \bX\rq$ is
\begin{align}
\eta&=\frac{1}{R}\int_{0}^{R}\mathbb{P}\left[h>h_y\right]\mathrm{d}y\nonumber \\
&=\frac{1}{R}\int_{0}^{R}\left(1-\int_{0}^{\frac{yH_\mathrm{B}+(R-y)H_\mathrm{U}}{R}}f_H(h)\mathrm{d}h\right)\mathrm{d}y\label{eqn3d1}\\
&=1-\int_{0}^{1}\int_{0}^{sH_\mathrm{B}+(1-s)H_\mathrm{U}}f_H(h)\mathrm{d}h\mathrm{d}s\label{eqn3d2},
\end{align}
where (\ref{eqn3d1}) is obtained by substituting (\ref{eqn3dh}), and (\ref{eqn3d2}) is from a change of variable as $s=\frac{y}{R}$.

Since $\eta$ is only determined by the distribution of the heights, which is independent of $K$, $\hat{K}$ can be viewed as the result of independent thinning of $K$ with a parameter of $\eta$. Hence $\hat{K}$ is also Poisson, and $\bbE[\hat{K}]=\eta\bbE[K]$.
\end{proof}

Note that incorporating the height of blockages only introduces a constant scaling factor $\eta$ to the results that ignore height. Hence, for simplicity, we will not consider the height of blockages in the following sections. The results can be readily modified to account for heights by incorporating the $\eta$ factor appropriately.
\subsection{Quantification of Power Losses by Blockages}
Now armed with the distribution of the number of blockages per link, we continue to derive the distribution of the penetration losses due to blockages on a link. The power losses caused by random blockages on a link is expressed as $S=\prod_{k=1}^{K}\gamma_k$, where $K$ is the number of blockages on the path, and $\gamma_k$ is the ratio of power losses due to $k$-th blockage. We have already established that $K$ is a Poisson variable with parameter $\beta R+p$.

If we use $[\:]$ to represent quantity in dB, then the ratio of penetration loss in dB is $[S]=\sum_{k=1}^{K}[\gamma_k]$, where $[\gamma_k]=10\log_{10}\gamma_k.$ Since we ignore reflections, $0<S\le 1$. Thus $[S]$ is always non-positive. It is useful to evaluate penetration loss in dB, for the statistics of penetration losses through different materials, i.e $\gamma_k$, are mostly recorded in dB.

Next, we will focus on deriving the distribution of $[S]$ with the assumption that $[\gamma_k]$ are i.i.d. with some distribution $f_{[\gamma_k]}(x)$. The Laplace transform of a random variable is used to simplify the computation. For a random variable $X$ in $\bbR^{+}$, its Laplace transform $\mathbb{L}_X(s)$ is defined as
\begin{eqnarray}
\mathbb{L}_X(s)=\bbE\left[\mathrm{e}^{-sX}\right]=\int_{0}^{\infty}f_X(x)\mathrm{e}^{-sx}\mathrm{d}x,
\end{eqnarray}
where $f_X(x)$ is the probability density function of $X$. Knowing the Laplace transform of $X$ is equivalent to knowing its distribution, for the probability density function can be obtained by computing the inverse Laplace transform of $\mathbb{L}_X(s)$. We obtain $\mathbb{L}_{[S]}(t)$ for general distributions of $[\gamma_k]$ in the following theorem.
\begin{theorem}
Suppose the length of the link is $R$, and $[\gamma_k]$ is i.i.d. variable with the Laplace transform $\mathbb{L}_{[\gamma_k]}(t)$. Then the Laplace transform of $[S]$ is
\begin{eqnarray}\label{thm:generalS}
\mathbb{L}_{[S]}(t)=\mathrm{e}^{\left(\beta R+p\right)\left(\mathbb{L}_{[\gamma_k]}(t)-1\right)}.
\end{eqnarray}
\end{theorem}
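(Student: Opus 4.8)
The plan is to compute the Laplace transform of $[S]$ by conditioning on the number of blockages $K$, using the fact (already established) that $K$ is Poisson with parameter $\beta R + p$ and that, given $K$, the quantity $[S]=\sum_{k=1}^{K}[\gamma_k]$ is a sum of $K$ i.i.d.\ terms independent of $K$. This is the standard compound-Poisson / random-sum calculation, and the key ingredient is the multiplicativity of the Laplace transform over independent summands.

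First I would write, by the law of total expectation,
\begin{eqnarray*}
\mathbb{L}_{[S]}(t)=\bbE\left[\mathrm{e}^{-t[S]}\right]=\bbE\left[\bbE\left[\mathrm{e}^{-t\sum_{k=1}^{K}[\gamma_k]}\,\Big|\,K\right]\right].
\end{eqnarray*}
Since the $[\gamma_k]$ are i.i.d.\ and independent of $K$, the inner expectation factors as $\prod_{k=1}^{K}\bbE[\mathrm{e}^{-t[\gamma_k]}]=\left(\mathbb{L}_{[\gamma_k]}(t)\right)^{K}$ (with the convention that an empty product equals $1$ when $K=0$, consistent with the paper's $0^0=1$ convention). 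Next I would take the outer expectation over $K\sim\mathrm{Poisson}(\beta R+p)$, which amounts to evaluating the probability generating function of a Poisson random variable at the point $z=\mathbb{L}_{[\gamma_k]}(t)$. Recalling that for $K\sim\mathrm{Poisson}(\nu)$ one has $\bbE[z^{K}]=\mathrm{e}^{\nu(z-1)}$, and substituting $\nu=\beta R+p$ and $z=\mathbb{L}_{[\gamma_k]}(t)$, yields
\begin{eqnarray*}
\mathbb{L}_{[S]}(t)=\sum_{n=0}^{\infty}\left(\mathbb{L}_{[\gamma_k]}(t)\right)^{n}\frac{(\beta R+p)^{n}}{n!}\mathrm{e}^{-(\beta R+p)}=\mathrm{e}^{(\beta R+p)\left(\mathbb{L}_{[\gamma_k]}(t)-1\right)},
\end{eqnarray*}
which is exactly the claimed formula.

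There is no serious obstacle here; the argument is a textbook compound-Poisson computation. The only points worth being careful about are: (i) justifying the factorization of the conditional Laplace transform, which requires the independence of $\{[\gamma_k]\}$ among themselves and from $K$ — both guaranteed by the i.i.d.\ assumption on $[\gamma_k]$ and by the fact that $K$ depends only on the geometry (blockage locations/sizes/orientations), not on the penetration-loss marks; (ii) the $K=0$ boundary case, where the empty product must be read as $1$ so that the line-of-sight link contributes $[S]=0$, i.e.\ no attenuation; and (iii) interchanging the infinite sum over $n$ with the expectation, which is legitimate since all terms are nonnegative (monotone convergence) and $|\mathbb{L}_{[\gamma_k]}(t)|\le 1$ for $t\ge 0$ because $[\gamma_k]\ge 0$ in the dB-magnitude convention used just above in the text (writing $[S]=-\sum[\,|\gamma_k|\,]$-style nonnegativity), so the geometric-type series converges absolutely. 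Assembling these remarks gives the proof.
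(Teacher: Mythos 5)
Your proof is correct and follows essentially the same route as the paper's: condition on $K$, factor the conditional Laplace transform using the i.i.d.\ assumption to get $\left(\mathbb{L}_{[\gamma_k]}(t)\right)^K$, and then evaluate the Poisson probability generating function at $z=\mathbb{L}_{[\gamma_k]}(t)$. Your additional remarks on the $K=0$ case, the independence of the marks from $K$, and the sign convention needed so that the Laplace transform is well defined are careful touches the paper leaves implicit, but they do not change the argument.
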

\begin{proof}
It follows that
\begin{align*}
\mathbb{L}_{[S]}(t)&=\bbE\left[\mathrm{e}^{-t\sum_{k=1}^{K}[\gamma_k]}\right]\\
&\stackrel{(a)}=\bbE_K\left[\left[\mathbb{L}_{[\gamma_k]}(t)\right]^K\right]\\
&\stackrel{(b)}=\mathrm{e}^{\left(\beta R+p\right)\left(\mathbb{L}_{[\gamma_k]}(t)-1\right)},
\end{align*}
where $\beta$ and $p$ are parameters of blockage processes defined in the previous part, $(a)$ is from the fact that $[\gamma_k]$ are i.i.d. variables, and $(b)$ follows that $K$ is a Poisson variable with parameter $\beta R+p$.
\end{proof}
For a generally distributed $[\gamma_k]$, the distribution of $[S]$ may be hard to compute from $L_[s](t)$ exactly, but can be obtained through numerical integrals. Closed form expressions can be found in certain special cases.
\begin{corollary}
In the impenetrable case where $\gamma_k=0$, $S$ is a Bernoulli random variable with parameter $\bbP\{S=1\}=\mathrm{e}^{-(\beta R+p)}$.
\end{corollary}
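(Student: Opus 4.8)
The plan is to recognize the impenetrable case as the specialization $[\gamma_k] \to -\infty$ (equivalently $\gamma_k = 0$) of the general Laplace-transform formula just established, and to read off the distribution of $S$ directly from the structure $S = \prod_{k=1}^{K}\gamma_k$ rather than by inverting a transform. First I would observe that when every $\gamma_k = 0$, the product $S = \prod_{k=1}^{K}\gamma_k$ equals $1$ precisely when the product is empty, i.e. when $K = 0$ (here using the paper's convention that an empty product is $1$, consistent with the earlier footnote $0^0 = 1$), and equals $0$ as soon as $K \ge 1$, since at least one factor is then zero. Hence $S$ takes only the two values $0$ and $1$, so it is a Bernoulli random variable, and $\mathbb{P}\{S = 1\} = \mathbb{P}\{K = 0\}$.

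Next I would invoke Theorem \ref{thm:distK} (or Corollary \ref{cor:free} directly), which states that $K$ is Poisson with mean $\beta R + p$, to conclude
\begin{eqnarray*}
\mathbb{P}\{S = 1\} = \mathbb{P}\{K = 0\} = \mathrm{e}^{-(\beta R + p)},
\end{eqnarray*}
which is exactly the claimed parameter. This already completes the proof. As an internal consistency check — and the only place where a little care is warranted — I would note that this matches the general formula: in the impenetrable limit $[\gamma_k] = 10\log_{10}\gamma_k \to -\infty$, so $\mathbb{L}_{[\gamma_k]}(t) = \bbE[\mathrm{e}^{-t[\gamma_k]}] \to +\infty$ for $t < 0$ and $\to 0$ for $t > 0$; the well-defined object is instead the distribution of $S$ itself, whose (ordinary) moment generating function / the quantity $\bbE[S^n]$ equals $\bbE_K[(\bbE[\gamma_k])^K]$, and with $\gamma_k \equiv 0$ this is $\bbE_K[0^K] = \mathbb{P}\{K=0\}$, recovering the same answer.

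The main (and really the only) obstacle is a bookkeeping one: being explicit about the empty-product convention so that the $K = 0$ case is handled correctly, and being careful that "impenetrable" means $\gamma_k = 0$ (total loss, so $S = 0$) rather than $\gamma_k = 1$. Once the convention is pinned down, the argument is immediate and no computation is required beyond evaluating the Poisson point mass at $0$. I would therefore write the proof in two or three sentences: identify the support $\{0,1\}$ of $S$, equate $\{S=1\}$ with $\{K=0\}$, and substitute the Poisson probability from Theorem \ref{thm:distK}.
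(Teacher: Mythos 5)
Your proof is correct and is exactly the intended argument: the paper states this corollary without proof, treating it as immediate from $S=\gamma^{K}=0^{K}$ (with the convention $0^0=1$) together with $\bbP\{K=0\}=\mathrm{e}^{-(\beta R+p)}$ from Corollary \ref{cor:free}. Your identification of $\{S=1\}$ with $\{K=0\}$ and the substitution of the Poisson point mass is precisely what is needed; the Laplace-transform consistency check is a harmless extra.
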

\begin{corollary}
Suppose that $\gamma_k$ is i.i.d. uniformly distributed on $[0,1]$. Then the probability density function of $[S]$ is
\begin{align}
f_{[S]}(x)=\mathrm{e}^{-(\beta R+p)}\delta(x)+\mathrm{e}^{-\left(\beta R+p-0.1\ln10x\right)}\sqrt{\frac{-0.1\ln10(\beta R+p)}{x}}I_1\left(2\sqrt{\frac{-\ln10}{10}x(\beta R+p)}\right),\label{thm:unifSdb}
\end{align}
where $x\le 0$, $\delta(x)$ is the Dirac delta function, and $I_1$ is the first-order modified Bessel function of the first kind.
The probability density function of $S$ is
\begin{eqnarray}
f_{S}(y)=\mathrm{e}^{-(\beta R+p)}\delta(y-1)+\mathrm{e}^{-(\beta R+p)}\sqrt{-\frac{\beta R+p}{\ln y}}\mathrm{I}_1\left(2\:\sqrt{-(\beta R+p)\ln y}\right),\label{thm:unifS}
\end{eqnarray}
where $y\in (0,1]$.
\end{corollary}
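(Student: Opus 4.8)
The plan is to recognize $[S]$, after a sign flip, as a compound Poisson sum with exponential increments, to obtain its law by conditioning on the number of blockages $K$, and to collapse the resulting power series into a modified Bessel function. Since $\gamma_k$ is uniform on $[0,1]$, the variable $-\ln\gamma_k$ is a unit‑mean exponential, so $-[\gamma_k]=-10\log_{10}\gamma_k=-\tfrac{10}{\ln 10}\ln\gamma_k$ is exponential with rate $a:=\tfrac{\ln 10}{10}=0.1\ln 10$. Setting $\Lambda:=\beta R+p$ and $T:=-[S]=\sum_{k=1}^{K}(-[\gamma_k])$, Theorem~\ref{thm:distK} gives $K\sim\mathrm{Poisson}(\Lambda)$, independent of the $\gamma_k$. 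The event $\{T=0\}$ is exactly $\{K=0\}$, which has probability $\mathrm{e}^{-\Lambda}$ by Corollary~\ref{cor:free}; on it $S=1$ (empty product) and $[S]=0$, so the law of $[S]$ carries an atom $\mathrm{e}^{-\Lambda}\delta(x)$ at $x=0$.

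For the continuous part I would condition on $K=n\ge 1$: a sum of $n$ i.i.d.\ $\mathrm{Exp}(a)$ variables is $\mathrm{Gamma}(n,a)$, so for $t>0$
\begin{align*}
f_T(t)&=\sum_{n=1}^{\infty}\mathrm{e}^{-\Lambda}\frac{\Lambda^{n}}{n!}\cdot\frac{a^{n}t^{n-1}}{(n-1)!}\,\mathrm{e}^{-at}\\
&=\frac{\mathrm{e}^{-\Lambda-at}}{t}\sum_{n=1}^{\infty}\frac{(\Lambda a t)^{n}}{n!\,(n-1)!}.
\end{align*}
Using the series $I_1(z)=\sum_{m\ge 0}\frac{(z/2)^{2m+1}}{m!\,(m+1)!}$ (substitute $z=2\sqrt{u}$ and shift $n=m+1$) gives $\sum_{n\ge 1}\frac{u^{n}}{n!\,(n-1)!}=\sqrt{u}\,I_1(2\sqrt{u})$, hence $f_T(t)=\mathrm{e}^{-\Lambda-at}\sqrt{\Lambda a/t}\,I_1(2\sqrt{\Lambda a t})$. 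Reflecting through $[S]=-T$ (so $x=-t\le 0$) and adding back the atom produces (\ref{thm:unifSdb}): with $a=0.1\ln 10$ the exponent becomes $-(\Lambda-0.1\ln 10\,x)$, the prefactor is $\sqrt{-0.1\ln 10\,\Lambda/x}$, and the Bessel argument is $2\sqrt{-\tfrac{\ln 10}{10}x\Lambda}$, matching the stated expression term by term.

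For $S=\prod_{k=1}^{K}\gamma_k$ itself I would repeat the argument with $W:=-\ln S=\sum_{k=1}^{K}(-\ln\gamma_k)$, which is precisely the $a=1$ instance above, so $f_W(w)=\mathrm{e}^{-\Lambda}\delta(w)+\mathrm{e}^{-\Lambda-w}\sqrt{\Lambda/w}\,I_1(2\sqrt{\Lambda w})$ for $w\ge 0$. Then $S=\mathrm{e}^{-W}$, and the change of variables $y=\mathrm{e}^{-w}$ on $(0,1]$ (Jacobian $|\mathrm{d}w/\mathrm{d}y|=1/y$, with the atom at $w=0$ mapping to $y=1$) gives (\ref{thm:unifS}); here the factor $\mathrm{e}^{-w}=y$ in the continuous part cancels the Jacobian $1/y$, leaving $\mathrm{e}^{-\Lambda}\sqrt{-\Lambda/\ln y}\,I_1(2\sqrt{-\Lambda\ln y})$. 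Equivalently, one may push (\ref{thm:unifSdb}) forward through $[S]=\tfrac{10}{\ln 10}\ln S$.

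The only genuine work is the series‑to‑Bessel identification together with careful bookkeeping of the exponential prefactors, the Jacobian, and the Dirac mass through the sign flip and the logarithmic substitution. One could alternatively invert the Laplace transform $\mathbb{L}_{[S]}(t)=\mathrm{e}^{\Lambda(\mathbb{L}_{[\gamma_k]}(t)-1)}$ from the preceding theorem (for uniform $\gamma_k$ one has $\mathbb{L}_{[\gamma_k]}(t)=\tfrac{a}{a-t}$, so $\mathbb{L}_{[S]}(t)=\mathrm{e}^{\Lambda t/(a-t)}$), but since $[S]$ is supported on the negative half‑line this requires a two‑sided transform and is less transparent than the direct conditioning computation.
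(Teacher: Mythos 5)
Your proof is correct and follows essentially the same route as the paper: the paper expands $\mathbb{L}_{[S]}$ as a Poisson mixture of $\bigl[\mathbb{L}_{[\gamma_k]}(t)\bigr]^n$ and inverts term by term, which yields exactly the $\mathrm{Gamma}(n,a)$ densities you write down directly by conditioning on $K=n$, followed by the same series-to-$I_1$ identification and the same logarithmic change of variables. Your probabilistic phrasing merely sidesteps the one-sided-transform sign convention that the paper glosses over; the substance is identical.
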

\begin{proof}
Note that the Laplace transform of $[\gamma_k]$ is
\begin{align}
\mathbb{L}_{[\gamma_k]}(t)=\frac{0.1\ln 10}{0.1\ln 10+t},\label{eqn:gamma}
\end{align}
and that the Laplace transform of $[S]$ can be expressed as
\begin{eqnarray}
\mathbb{L}_{[S]}(t)&=&\bbE_K\left[\left[\mathbb{L}_{[\gamma_j]}(t)\right]^K\right]\nonumber\\
&=&\sum_{n=0}^{\infty}\frac{\mathrm{e}^{-(\beta R+p)}(\beta R+p)^n}{n!}\left[\mathbb{L}_{[\gamma_j]}(t)\right]^n. \label{proof:unifS}
\end{eqnarray}
Substituting (\ref{eqn:gamma}) into (\ref{proof:unifS}), and compute the inverse Laplace transform of each term in the right hand side. This leads to
\begin{eqnarray*}
f_{[S]}(x)=\mathrm{e}^{-(\beta R+p)}\delta(x)+\sum_{m=0}^{\infty}\frac{\mathrm{e}^{-(\beta R+p-0.1\ln10x)}\left[0.1\ln10\left(\beta R+p\right)\right]^{m+1} (-x)^{m}}{(m)!\:(m+1)!},
\end{eqnarray*}
which can be simplified to (\ref{thm:unifSdb}) by using the modified Bessel function of the first kind. By changing variables as $y=\mathrm{e}^{0.1\ln10 x}$ in (\ref{thm:unifSdb}), we obtain (\ref{thm:unifS}).
\end{proof}

Even if $\gamma_k$ is a continuous random variable on $[0,1]$, $S$ always has a discontinuity at $x=1$ in the form of Dirac delta function $\delta(x-1)$, which indicates the probability of no power loss by blockage. Hence the amplitude of the impulse $\delta(x-1)$ equals the probability that a link admits line-of-sight propagation as computed in Corollary \ref{cor:free}.

Since the general closed-form expression for $f_S(x)$ is hard to obtain through the inverse Laplace transform, we consider an approximation of the continuous part of $f_S(x)$ using the beta distribution. The beta distribution has also been applied to model the behavior of random variables with support on intervals of finite length in a wide variety of disciplines \cite{Johnson1995}. We assume the target probability density function of the approximation has the form
\begin{eqnarray}
f_{\hat{S}}(x)=\left(1-\mathrm{e}^{-(\beta R+p)}\right)\frac{\Gamma(a+b)}{\Gamma(a)\Gamma(b)}x^{a-1}\left(1-x\right)^{b-1}+\mathrm{e}^{-(\beta R+p)}\delta\left(x-1\right), \label{eqn:approxS}
\end{eqnarray}
where $x\in (0,1]$, and $\Gamma(z)$ is the gamma function defined as $\Gamma(z)=\int_{0}^{\infty}t^{z-1}\mathrm{e}^{-t}\mathrm{d}t$.

The parameters $a$ and $b$ in (\ref{eqn:approxS}) can be determined by matching the moments of $S$. The moments of $S$ can be computed through the following theorem.
\begin{theorem}\label{thm:momentS}
Suppose that $\gamma_j$ are i.i.d. random variable on $[0,1]$. Then the $n$-th moment of $S$ is
\begin{eqnarray}
\bbE[S^n]=\mathrm{e}^{-(\beta R+p)(1-\bbE[\gamma_k^n])}.
\end{eqnarray}
\end{theorem}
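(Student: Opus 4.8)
The plan is to condition on the number of blockages $K$ and use the fact, established in Theorem \ref{thm:distK}, that $K$ is Poisson with parameter $\beta R + p$, together with the i.i.d. assumption on the $\gamma_k$. Write $S = \prod_{k=1}^K \gamma_k$, where an empty product (when $K = 0$) equals $1$. Then $S^n = \prod_{k=1}^K \gamma_k^n$, and conditioning on $K$ gives
\begin{align*}
\bbE[S^n] = \bbE_K\!\left[\bbE\!\left[\prod_{k=1}^K \gamma_k^n \,\Big|\, K\right]\right] = \bbE_K\!\left[\left(\bbE[\gamma_k^n]\right)^K\right],
\end{align*}
where the inner step uses that the $\gamma_k$ are i.i.d. and independent of $K$.

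Next I would evaluate the outer expectation using the probability generating function of a Poisson random variable. Setting $\mu = \beta R + p$ and $q = \bbE[\gamma_k^n]$, which lies in $[0,1]$ since $\gamma_k \in [0,1]$, we have
\begin{align*}
\bbE_K\!\left[q^K\right] = \sum_{m=0}^{\infty} \frac{\mathrm{e}^{-\mu}\mu^m}{m!} q^m = \mathrm{e}^{-\mu} \sum_{m=0}^{\infty} \frac{(\mu q)^m}{m!} = \mathrm{e}^{-\mu}\mathrm{e}^{\mu q} = \mathrm{e}^{-\mu(1-q)}.
\end{align*}
Substituting back $\mu = \beta R + p$ and $q = \bbE[\gamma_k^n]$ yields $\bbE[S^n] = \mathrm{e}^{-(\beta R + p)(1 - \bbE[\gamma_k^n])}$, as claimed.

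There is really no substantive obstacle here; the argument is a short computation once the conditioning structure is set up. The only point requiring a modicum of care is the treatment of the $K = 0$ event, where the convention $0^0 = 1$ (equivalently, the empty product equals $1$) ensures the formula $S^n = \prod_{k=1}^K \gamma_k^n$ remains valid and contributes the $m = 0$ term $\mathrm{e}^{-\mu}$ to the sum; this is consistent with the footnote convention adopted earlier in the paper. One may also remark that this result is just a special case of the Laplace-transform identity (\ref{thm:generalS}): indeed $\bbE[S^n] = \bbE[\mathrm{e}^{n[S]\ln 10/10}]$ recovers the same expression after identifying the relevant transform value, but the direct conditioning argument above is cleaner and self-contained.
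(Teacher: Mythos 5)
Your proof is correct and follows essentially the same route as the paper: condition on $K$, use the i.i.d. assumption to reduce to $\bbE_K\left[\left(\bbE[\gamma_k^n]\right)^K\right]$, and evaluate via the Poisson probability generating function. The paper simply leaves the generating-function computation implicit in its step $(c)$, whereas you spell it out and note the $K=0$ convention explicitly.
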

\begin{proof}
The proof is straightforward as
\begin{align*}
\bbE[S^n]&
=\bbE_{K}\left[\left(\bbE[\gamma_k^n]\right)^K\right]\\
&\stackrel{(c)}=\mathrm{e}^{-(\beta R+p)(1-\bbE[\gamma_k^n])},
\end{align*}
where (c) follows from the fact that $K$ is a Poisson random variable with mean $\beta R+p$.
\end{proof}
The parameters in (\ref{eqn:approxS}) can be obtained by matching the moments of $S$ and $\hat{S}$.
\begin{corollary}
The parameters in the distribution of $\hat{S}$, i.e. in (\ref{eqn:approxS}), can be evaluated as
\begin{align*}
a=\frac{\left(\delta_2-\delta_1\right)\left(\delta_1-\delta_0\right)}{\left(\delta_1-\delta_0\right)^2-\left(\delta_2-\delta_0\right)\left(1-\delta_1\right)},
\end{align*}
\begin{align*}
b=\frac{\left(\delta_2-\delta_1\right)\left(1-\delta_1\right)}{\left(\delta_1-\delta_0\right)^2-\left(\delta_2-\delta_0\right)\left(1-\delta_1\right)},
\end{align*}
where $\delta_0=\mathrm{e}^{-(\beta R+p)}$, $\delta_1=\delta_0^{1-\mathbb{E}[\gamma_k]}$, and $\delta_2=\delta_0^{1-\mathbb{E}[\gamma_k^2]}$ are constants determined by the statistics of buildings in the area.
\end{corollary}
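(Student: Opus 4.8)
The plan is to read the trial density (\ref{eqn:approxS}) as a mixture: $\hat S$ equals $1$ with probability $\delta_0=\mathrm{e}^{-(\beta R+p)}$ and is otherwise a $\mathrm{Beta}(a,b)$ variable $B$, the weight $1-\delta_0$ of the continuous part being forced once the atom mass is fixed and the density is normalized. That atom mass coincides with $\bbP\{S=1\}$ by Corollary~\ref{cor:free}, so the only remaining freedom is the pair $(a,b)$, which I would pin down by matching the first two moments, $\bbE[\hat S]=\bbE[S]$ and $\bbE[\hat S^2]=\bbE[S^2]$. By Theorem~\ref{thm:momentS} the targets are exactly $\bbE[S]=\delta_0^{\,1-\bbE[\gamma_k]}=\delta_1$ and $\bbE[S^2]=\delta_0^{\,1-\bbE[\gamma_k^2]}=\delta_2$.

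Using the standard beta moments $\bbE[B]=\frac{a}{a+b}$ and $\bbE[B^2]=\frac{a(a+1)}{(a+b)(a+b+1)}$, the mixture yields $\bbE[\hat S]=(1-\delta_0)\frac{a}{a+b}+\delta_0$ and $\bbE[\hat S^2]=(1-\delta_0)\frac{a(a+1)}{(a+b)(a+b+1)}+\delta_0$. Setting these equal to $\delta_1$ and $\delta_2$, it is convenient to first pass to the implied moments of the continuous component, $\mu:=\frac{\delta_1-\delta_0}{1-\delta_0}=\bbE[B]$ and $\nu:=\frac{\delta_2-\delta_0}{1-\delta_0}=\bbE[B^2]$. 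Then $\frac{a}{a+b}=\mu$ and $\frac{a(a+1)}{(a+b)(a+b+1)}=\nu$ is precisely the system solved by the classical method-of-moments fit of a beta law: with $\sigma^2:=\nu-\mu^2$ one obtains $a=\mu\big(\frac{\mu(1-\mu)}{\sigma^2}-1\big)$ and $b=(1-\mu)\big(\frac{\mu(1-\mu)}{\sigma^2}-1\big)$. Substituting $\mu$ and $\nu$ back in terms of $\delta_0,\delta_1,\delta_2$ and cancelling the common powers of $1-\delta_0$ reduces these to the claimed rational expressions for $a$ and $b$.

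To finish I would verify the fit is well posed. Since $\gamma_k\in[0,1]$ we have $\bbE[\gamma_k^2]\le\bbE[\gamma_k]\le1$, and since $\delta_0\le1$ the map $t\mapsto\delta_0^{\,t}$ is nonincreasing, so $\delta_0\le\delta_2\le\delta_1\le1$; hence $\mu,\nu\in[0,1]$ with $\nu\le\mu$, while $\sigma^2=\bbE[B^2]-(\bbE[B])^2\ge0$ by Jensen, strictly unless the continuous component is degenerate. This keeps the denominator nonzero and $a,b>0$ in the nondegenerate regime.

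The only genuinely mechanical obstacle is the algebraic reduction in the second step — inverting the two beta-moment equations and simplifying the ratios of polynomials in $\delta_0,\delta_1,\delta_2$ to the stated closed form; I would track the sign of the denominator carefully and sanity-check against the limit $\delta_1=\delta_2$ (where $\sigma^2\to0$ and the beta part concentrates) and against the line-segment special case, where $\delta_0,\delta_1,\delta_2$ are given explicitly through $\beta R+p$ and the moments of $\gamma_k$.
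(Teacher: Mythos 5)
Your route is exactly the one the paper takes --- its proof is a one--line appeal to matching the first and second moments of $\hat S$ in (\ref{eqn:approxS}) against Theorem \ref{thm:momentS} --- and your setup is sound: the atom at $1$ carries mass $\delta_0$, so the continuous component $B\sim\mathrm{Beta}(a,b)$ must satisfy $\bbE[B]=\mu=\frac{\delta_1-\delta_0}{1-\delta_0}$ and $\bbE[B^2]=\nu=\frac{\delta_2-\delta_0}{1-\delta_0}$, and the classical inversion gives $a=\frac{\mu(\mu-\nu)}{\nu-\mu^2}$, $b=\frac{(1-\mu)(\mu-\nu)}{\nu-\mu^2}$. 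The gap is precisely in the step you set aside as ``mechanical'': the substitution does \emph{not} reduce to the displayed expressions. Carrying it out, the powers of $1-\delta_0$ cancel to give
\begin{align*}
a=\frac{\left(\delta_2-\delta_1\right)\left(\delta_1-\delta_0\right)}{\left(\delta_1-\delta_0\right)^2-\left(\delta_2-\delta_0\right)\left(1-\delta_0\right)},
\qquad
b=\frac{\left(\delta_2-\delta_1\right)\left(1-\delta_1\right)}{\left(\delta_1-\delta_0\right)^2-\left(\delta_2-\delta_0\right)\left(1-\delta_0\right)},
\end{align*}
i.e.\ the last factor in the denominator is $1-\delta_0$, not the $1-\delta_1$ appearing in the corollary, and the two are not algebraically equivalent. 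Note that your denominator equals $-(1-\delta_0)^2(\nu-\mu^2)$, which is automatically negative against a negative numerator, so $a,b>0$ as required; the printed denominator enjoys no such identity.

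A concrete check makes the discrepancy visible: take $\delta_0=0.5$, $\bbE[\gamma_k]=0.5$, $\bbE[\gamma_k^2]=0.3$, so $\delta_1=2^{-1/2}\approx0.707$ and $\delta_2=2^{-0.7}\approx0.616$. Direct moment matching ($\mu\approx0.414$, $\nu\approx0.231$) gives $a\approx1.27$, as does the formula with $1-\delta_0$; the corollary's formula gives $a\approx-0.019/0.009\approx-2.1$, which is not an admissible beta parameter. So either the corollary as printed carries a subscript typo ($\delta_1$ for $\delta_0$ in that one factor), or your derivation must be reconciled with it --- and since the paper supplies no intermediate algebra, the burden of your proof is exactly to perform this reduction rather than assert that it ``reduces to the claimed rational expressions.'' As written, that assertion is false, and your argument establishes the corrected formulas rather than the stated ones.
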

\begin{proof}
It follows directly from matching the first and second moment of $\hat{S}$ in (\ref{eqn:approxS}) and that of $S$ in Theorem \ref{thm:momentS}.
\end{proof}
Note that given $\beta$ and $p$, the second-order statistics of $\gamma_k$ is sufficient to determine the distribution of $\hat{S}$.

\subsection{Analysis of Blockage Effects on Link Budget}
In this section, we investigate how blockage effects the link budget by deriving a new path loss formulas that accounts for the penetration losses on a link. By Theorem \ref{thm:momentS}, we derive the following formula.
\begin{corollary}\label{cor:pathloss}
Considering the blockages in a network, the average received power $\bbE\left[P_\mathrm{r}\right]$ on a link of distance $R$ is
\begin{eqnarray}
\bbE\left[P_\mathrm{r}\right]=\frac{M\bbE[g] \bbE[S]}{R^{\alpha}}=\frac{M\: \:\mathrm{e}^{-(\beta R+p)(1-\bbE[\gamma_k])}}{R^{\alpha}},
\end{eqnarray}
where the parameters are the same as defined in (\ref{powerloss}).
\end{corollary}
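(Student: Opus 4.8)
The plan is to reduce the statement to an immediate consequence of the received-power expression in~(\ref{powerloss}) and the first-moment formula in Theorem~\ref{thm:momentS}. First I would rewrite the received power on the link of length $R$ as $P_\mathrm{r} = M g S / R^{\alpha}$, where $g$ is the mean-one exponential small-scale fading term and $S = \prod_{k=1}^{K}\gamma_k$ is the penetration-loss ratio studied throughout Section~\ref{sec:quantif}; this is exactly~(\ref{powerloss}) with the product over blockages collected into $S$. Taking expectations then gives $\bbE[P_\mathrm{r}] = M\,\bbE[gS]/R^{\alpha}$.

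Next I would invoke independence: by the model assumptions the small-scale fading $g$ is drawn independently of the blockage process (hence of $K$ and of the $\gamma_k$), so $\bbE[gS] = \bbE[g]\,\bbE[S]$. Since $g$ is exponential with mean one, $\bbE[g] = 1$, which yields the first equality $\bbE[P_\mathrm{r}] = M\,\bbE[g]\,\bbE[S]/R^{\alpha} = M\,\bbE[S]/R^{\alpha}$. For the second equality I would apply Theorem~\ref{thm:momentS} with $n = 1$, which (using that $K$ is Poisson with mean $\beta R + p$ from Theorem~\ref{thm:distK}) gives $\bbE[S] = \mathrm{e}^{-(\beta R + p)(1 - \bbE[\gamma_k])}$. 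Substituting this in completes the chain of equalities in the statement.

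There is essentially no serious obstacle here: the corollary is a packaging of earlier results. The only point that warrants a sentence of care is the independence step $\bbE[gS] = \bbE[g]\bbE[S]$, which relies on the Rayleigh-fading assumption stating that each link's fading is i.i.d.\ and, in particular, independent of the spatial blockage configuration; once that is noted, the remaining manipulations are purely substitutions of formulas already established.
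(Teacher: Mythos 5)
Your proposal is correct and follows exactly the route the paper intends: the corollary is stated as an immediate consequence of Theorem~\ref{thm:momentS} with $n=1$, combined with the received-power expression~(\ref{powerloss}), the independence of the fading term $g$ from the blockage process, and $\bbE[g]=1$. Nothing is missing; your explicit remark about why $\bbE[gS]=\bbE[g]\,\bbE[S]$ holds is the only step the paper leaves implicit.
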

Corollary \ref{cor:pathloss} indicates that blockage effects, on average, introduce an additional exponential decay in the link budget. This observation matches prior results in \cite{Franceschetti2004}, where data from field experiments was used to verify the statement. Note that Corollary \ref{cor:pathloss} provides a path loss formula for the general case, where the locations of the transmitter and receiver, whether indoor or outdoor, are not specified. We will provide the path loss formulas in both outdoor-to-outdoor and indoor-to-outdoor links in the following theorem.
\begin{theorem}\label{thm:outdoor}
The average received power for outdoor-to-outdoor links is
\begin{align}
\bbE\left[P_\mathrm{r}\right]=\frac{M\:\mathrm{e}^{-(\beta R-p)(1-\bbE[\gamma_k])}}{R^{\alpha}},\label{eqn:outout}
\end{align}
while for indoor-to-outdoor links, it is
\begin{align}
\bbE\left[P_\mathrm{r}\right]&=\frac{M\left(1-\mathrm{e}^{-\bbE\left[\gamma_k\right]p}\right)\mathrm{e}^{-\beta R(1-\bbE[\gamma_k])}}{R^{\alpha}\left(1-\mathrm{e}^{-p}\right)}\label{eqn:outin}\\
&\stackrel{(d)}\approx\frac{M\bbE[\gamma_k]\mathrm{e}^{-\beta R(1-\bbE[\gamma_k])}}{R^{\alpha}}.
\end{align}
\end{theorem}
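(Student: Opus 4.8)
The plan is to sharpen the accounting that underlies Corollary~\ref{cor:pathloss} by splitting the buildings crossing the link $\bO\bX$ according to which, if either, endpoint they contain, and then conditioning on the indoor/outdoor status of the two endpoints. Let $\cC$ be the set of buildings crossing $\bO\bX$ and write $\cC=\cC_{\bO}\cup\cC_{\bX}\cup\cC_{\mathrm{int}}$, where $\cC_{\bO}$ (resp.\ $\cC_{\bX}$) collects the buildings that contain the point $\bO$ (resp.\ $\bX$), and $\cC_{\mathrm{int}}$ collects the buildings that cross the segment but contain neither endpoint. A rectangle of dimensions $\ell\times w$ contains $\bO$ exactly when its center falls in the congruent rectangle of area $\ell w$ centered at $\bO$; this is precisely the ``$\ell w$'' term in the area $S(\ell,w,\theta)$ of Lemma~\ref{lem:2}, so by the same argument as in Theorem~\ref{thm:distK} with $R=0$ (cf.\ Corollary~\ref{cor:pointcover}), $|\cC_{\bO}|$ is Poisson with mean $p$, and likewise $|\cC_{\bX}|\sim\mathrm{Poisson}(p)$. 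Because $\bbE[|\cC|]=\beta R+p$ by Theorem~\ref{thm:distK}, inclusion--exclusion gives $\bbE[|\cC_{\mathrm{int}}|]=\beta R+p-2p=\beta R-p$, and $|\cC_{\mathrm{int}}|$ is again Poisson since it counts points of the marked PPP in a fixed region. Crucially, the three defining events are disjoint subsets of the marked process (center position together with $(L,W,\Theta)$), hence the counts $|\cC_{\bO}|,|\cC_{\bX}|,|\cC_{\mathrm{int}}|$ are mutually independent; here I neglect the event that one building contains both endpoints, whose expected number tends to $0$ as $R\to\infty$ (as $\bbE[L],\bbE[W]<\infty$) and which is exactly what makes the partition clean and the three Poisson means add back to $\beta R+p$.

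For the outdoor-to-outdoor case I would condition on $\{|\cC_{\bO}|=0\}\cap\{|\cC_{\bX}|=0\}$. On this event $S=\prod_k\gamma_k$ is the product over $\cC_{\mathrm{int}}$ alone, and since $\cC_{\mathrm{int}}$ lives on a region disjoint from the two frozen by the conditioning, its law is unchanged: $|\cC_{\mathrm{int}}|\sim\mathrm{Poisson}(\beta R-p)$. The Poisson generating-function identity already used in Theorem~\ref{thm:momentS} then gives $\bbE[S\mid\text{out-out}]=\mathrm{e}^{-(\beta R-p)(1-\bbE[\gamma_k])}$; multiplying by $M\bbE[g]/R^{\alpha}=M/R^{\alpha}$ yields (\ref{eqn:outout}).

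For the indoor-to-outdoor case I would condition on $\{|\cC_{\bO}|\ge1\}\cap\{|\cC_{\bX}|=0\}$ (interchanging the roles of $\bO$ and $\bX$ gives the same formula, since both endpoint-cover counts are $\mathrm{Poisson}(p)$). Factoring $S=S_{\bO}S_{\bX}S_{\mathrm{int}}$ with $S_{\bO}=\prod_{\cC_{\bO}}\gamma_k$, etc., the independence of the three counts (and of the i.i.d.\ $\gamma_k$) gives $\bbE[S\mid\text{in-out}]=\bbE[S_{\bO}\mid|\cC_{\bO}|\ge1]\,\bbE[S_{\bX}\mid|\cC_{\bX}|=0]\,\bbE[S_{\mathrm{int}}]$. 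The middle factor is $1$ (empty product), the last is $\mathrm{e}^{-(\beta R-p)(1-\bbE[\gamma_k])}$ as above, and the first is $\bbE[S_{\bO}\mid|\cC_{\bO}|\ge1]=\big(\bbE[S_{\bO}]-\bbP(|\cC_{\bO}|=0)\big)/\bbP(|\cC_{\bO}|\ge1)=\big(\mathrm{e}^{-p(1-\bbE[\gamma_k])}-\mathrm{e}^{-p}\big)/(1-\mathrm{e}^{-p})$, again by $|\cC_{\bO}|\sim\mathrm{Poisson}(p)$ and the generating-function identity. Substituting, multiplying by $M/R^{\alpha}$, and collecting the $p$-dependent exponentials with $\mathrm{e}^{-p(1-\bbE[\gamma_k])}=\mathrm{e}^{-p}\mathrm{e}^{p\bbE[\gamma_k]}$ collapses the expression to $\dfrac{M(1-\mathrm{e}^{-\bbE[\gamma_k]p})\,\mathrm{e}^{-\beta R(1-\bbE[\gamma_k])}}{R^{\alpha}(1-\mathrm{e}^{-p})}$, which is (\ref{eqn:outin}); the approximation $\stackrel{(d)}{\approx}$ then follows from $1-\mathrm{e}^{-x}\approx x$ applied to $x=\bbE[\gamma_k]p$ and $x=p$. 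As a sanity check, averaging the conditional answers against the endpoint-status probabilities $\mathrm{e}^{-2p}$, $2\mathrm{e}^{-p}(1-\mathrm{e}^{-p})$, $(1-\mathrm{e}^{-p})^2$ telescopes back to $\bbE[S]=\mathrm{e}^{-(\beta R+p)(1-\bbE[\gamma_k])}$ of Corollary~\ref{cor:pathloss}.

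The step I expect to be the main obstacle is this partition/independence claim: one must verify carefully that ``crosses $\bO\bX$ and contains $\bO$'', ``\ldots\ contains $\bX$'', and ``\ldots\ contains neither'' really do cut the marked PPP into disjoint pieces, so that the three counts are independent Poisson variables with means summing to $\beta R+p$, and that conditioning on the first two counts leaves the third untouched. The only place exactness fails is the possibility of a single building engulfing both endpoints; arguing that this is negligible---e.g.\ that its expected number tends to $0$ with $R$, so it can be dropped whenever typical buildings are much smaller than typical link lengths---is where a little care is needed, and it also explains why the clean identity of Corollary~\ref{cor:pathloss} is recovered only in that regime.
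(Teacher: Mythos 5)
Your proposal is correct and follows essentially the same route as the paper: condition on the indoor/outdoor status of the two endpoints, observe that the buildings covering an endpoint contribute a Poisson$(p)$ count each so the remaining crossing count has mean $\beta R-p$, and apply the Poisson generating-function identity. Your version is in fact more explicit than the paper's—in particular you actually derive the indoor-endpoint factor $\left(\mathrm{e}^{-p(1-\bbE[\gamma_k])}-\mathrm{e}^{-p}\right)/\left(1-\mathrm{e}^{-p}\right)$ and the independence of the three sub-counts, both of which the paper asserts without proof, and you correctly flag the only approximation (a single building engulfing both endpoints) that both arguments neglect.
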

\begin{proof}
If the link $\bO\bX$ is an outdoor-outdoor link, then both the transmitter $\bX$ and the receiver $\bO$ are not covered by blockages. By Theorem \ref{thm:general}, given that both $\bO$ and $\bX$ are not covered, the number of blockages is a Poisson variable with parameter $\bbE\left[V\left(\bO\bX\oplus D\right)-2V(D)\right]=\beta R-p$, where $D$ ranges over the object set of the Boolean scheme of blockages. Hence, given $\bO$ and $\bX$ are not covered, $\bbE\left[S\right]=\mathrm{e}^{-(\beta R-p)(1-\bbE[\gamma_k])}$, which leads to (\ref{eqn:outout}) directly.

Similarly, for an indoor-outdoor link, conditioning on that exactly one of $\bX$ and $\bO$ is covered by a blockage, $\bbE\left[S\right]=\frac{\left(1-\mathrm{e}^{-\bbE\left[\gamma_k\right]p}\right)\mathrm{e}^{-\beta R(1-\bbE[\gamma_k])}}{1-\mathrm{e}^{-p}}$, which leads to (\ref{eqn:outin}). Note that $p$ is the average ratio of land covered by buildings, which is smaller than 1. The approximation in $(d)$ follows from the fact that $\lim_{p\to0}\frac{1-\mathrm{e}^{-\bbE\left[\gamma_k\right]p}}{1-\mathrm{e}^{-p}}=\bbE\left[\gamma_k\right]$.
\end{proof}
\begin{remark}
By Theorem \ref{thm:outdoor}, an indoor cellular user will, on average, receive signals $\mathrm{e}^{-p\left(1-\bbE\left[\gamma_k\right]\right)}\bbE[\gamma_k]$ weaker than that received by an outdoor user. This motivates the wide deployment of indoor small cells like femtocells.
\end{remark}
\section{Analysis of Blockage effects on Impenetrable Networks} \label{apply}
In this section we analyze the effects of blockages on the system-level performance in cellular networks. For simplicity of the analysis, we focus on the case where the blockages are impenetrable.

To maintain the tractability of analysis, we make one key approximation: blockages affect each link independently. Strictly speaking, the number of blockages on different links are not always independent. For instance, if two base stations happen to locate on the same ray originating from the mobile user, then the base station further from the user will always have no fewer buildings on the link than the closer one has. Thus these two links are correlated. There are cases, however, where the number of the blockages on the two links are independent. For example, whenever two links share no common blockages, the numbers of blockages on the links are independent. Though an approximation, simulations show that ignoring the correlation of shadowing between links causes minor loss in accuracy, especially when the sizes of blockages $L_i$ and $W_i$ are relatively small compared with the lengths of links $\bO\bX_i$. Moreover, under the independent link assumption, simple expressions of performance metrics, such as network connectivity and coverage probability, are derived for the references of system design and analysis. Discussion on the correlation of shadowing among links can be found in \cite{Oestges2011,Patwari2008}

\subsection{Connectivity}
In a cellular network with impenetrable blockages, the blockages divide the plane into many isolated ``islands". Only the locations within the same island can communicate directly via wireless links. In this section, we investigate the connectivity of an impenetrable networks. Specifically, we aim to answer the following questions: (\romannumeral 1) How many candidate base stations, i.e. the unblocked base stations, are there for a typical user? (\romannumeral 2) What is the distribution of the distance to the nearest candidate base station?

To analyze the connectivity, we make some additional definitions. For $\bX,\bY\in\bbR^2$, we define $K_{XY}$ as the number of blockages on the direct link connecting $\bX$ and $\bY$. Moreover, $\bX$ is {\it visible} by $\bY$ if and only if $K_{XY}=0$, i.e. no blockages intersects the link $\bX\bY$. Intuitively, the {\it visible area} of an outdoor location $\bX$ is the set of locations which can be connected to $\bX$ with a line-of-sight link. Denote the visible region of $\bX$ as $Q_X$. A formal definition of $Q_X$ is introduced as follows.
\begin{definition}
Suppose that $\bX \in \mathbb{R}^2$ is not covered by a blockage. The visible region of $\bX$ is
\begin{align*}
Q_X=\{\bY\in\bbR^2:K_{XY}=0\}.
\end{align*}
If $\bX$ is covered by a blockage, then $Q_X = \emptyset$, i.e, it has no visible area.
\end{definition}
Note that in an impenetrable network, a mobile user at $\bX$ can only connect to base stations located in its visible area $Q_X$. The average size of visible area can be computed through the following theorem.
\begin{theorem}\label{thm:average}
The average size of a visible region in a cellular network with impenetrable blockages is
$
\bbE[V(Q_x)]=\frac{2\pi\mathrm{e}^{-p}}{\beta^2}.
$
\end{theorem}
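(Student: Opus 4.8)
The plan is to compute $\bbE[V(Q_X)]$ for a typical outdoor point $\bX$ by writing the volume as an integral over the plane and exchanging expectation and integration. Fix $\bX$ to be the origin $\bO$ (this is legitimate since the Boolean scheme is stationary) and condition on $\bO$ being outdoor. Then
\begin{align*}
\bbE[V(Q_O)] = \bbE\left[\int_{\bbR^2} \mathbf{1}\{K_{OY}=0\}\,\mathrm{d}\bY\right] = \int_{\bbR^2} \bbP\left(K_{OY}=0 \;\middle|\; \bO \text{ outdoor}\right)\,\mathrm{d}\bY,
\end{align*}
by Tonelli's theorem. So the main task is to evaluate the conditional line-of-sight probability $\bbP(K_{OY}=0 \mid \bO \text{ outdoor})$ as a function of $r = |\bO\bY|$, and then integrate in polar coordinates.

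The second step is to pin down that conditional probability. By Corollary~\ref{cor:free}, the unconditional probability is $\mathrm{e}^{-(\beta r + p)}$. The conditioning on $\bO$ being outdoor changes the picture slightly: the ``$p$'' term in $\beta r + p$ came from the Minkowski-sum volume $\bbE[V(\bO\bY \oplus D)] = \beta r + p$, and the constant $p$ is precisely $\lambda\bbE[V(D)]$, the contribution of blockages that contain the endpoints. Conditioning that $\bO$ is \emph{not} covered removes exactly one copy of $p$ (the blockages covering $\bO$); the event that a blockage covers $\bY$ is still allowed, since $\bY$ just has to be visible, not outdoor. Hence, following the same reasoning as in the proof of Theorem~\ref{thm:outdoor} (where conditioning on \emph{both} endpoints outdoor subtracted $2p$), conditioning on $\bO$ alone being outdoor gives $\bbP(K_{OY}=0 \mid \bO\text{ outdoor}) = \mathrm{e}^{-\beta r}$ for $r>0$ — the two $p$-terms, one from the prior and one from the conditioning, cancel. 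I expect this cancellation to be the conceptual crux: one must argue carefully, via the independence properties of the Boolean scheme and the partition $\Phi(\ell,w,\theta)$ from Lemma~\ref{lem:1}, that the blockages covering $\bO$ form an independent Poisson component whose removal exactly accounts for one factor $\mathrm{e}^{p}$, leaving $\mathrm{e}^{-(\beta r + p)}\cdot \mathrm{e}^{p} = \mathrm{e}^{-\beta r}$.

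The final step is the integration. Writing $\mathrm{d}\bY = r\,\mathrm{d}r\,\mathrm{d}\phi$ in polar coordinates centered at $\bO$,
\begin{align*}
\bbE[V(Q_O)] = \int_0^{2\pi}\int_0^{\infty} \mathrm{e}^{-\beta r}\, r\,\mathrm{d}r\,\mathrm{d}\phi = 2\pi \cdot \frac{1}{\beta^2} = \frac{2\pi}{\beta^2},
\end{align*}
using $\int_0^\infty r\,\mathrm{e}^{-\beta r}\,\mathrm{d}r = 1/\beta^2$. Finally, recall the statement we want carries an extra factor $\mathrm{e}^{-p}$: the quantity $Q_X$ in the theorem is $\emptyset$ when $\bX$ is indoor, so the \emph{unconditional} average over a typical point is $\bbE[V(Q_X)] = \bbP(\bX \text{ outdoor})\cdot \bbE[V(Q_X)\mid \bX\text{ outdoor}] = \mathrm{e}^{-p}\cdot \frac{2\pi}{\beta^2}$, which is the claimed $\frac{2\pi \mathrm{e}^{-p}}{\beta^2}$. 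The only genuine obstacle is justifying the exchange of expectation and integral (routine, by Tonelli since the integrand is nonnegative) and, more substantively, the conditional-probability computation in the second step; the rest is a one-line Gamma integral.
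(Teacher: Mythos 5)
Your proposal is correct and follows essentially the same route as the paper: both reduce the problem to the conditional line-of-sight probability $\mathrm{e}^{-(\beta r+p)}/\mathrm{e}^{-p}=\mathrm{e}^{-\beta r}$ given that the origin is outdoor, integrate in polar coordinates to get $2\pi/\beta^2$, and multiply by $\bbP(\bX\text{ outdoor})=\mathrm{e}^{-p}$. The only cosmetic difference is that you compute the expected area via Tonelli applied to $\int_{\bbR^2}\mathbf{1}\{K_{OY}=0\}\,\mathrm{d}\bY$, whereas the paper writes $V(Q_0)=\int_0^{2\pi}\tfrac{1}{2}D^2(\psi)\,\mathrm{d}\psi$ with $D(\psi)$ conditionally exponential of rate $\beta$ — the two computations coincide (and your "cancellation of $p$" step is most simply justified by noting that $\{K_{OY}=0\}$ already implies $\bO$ is outdoor, so the conditional probability is just the ratio of the two probabilities, exactly as in the paper).
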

\proof
Since the blockages are modeled as a Boolean scheme of rectangles, which is stationary in $\bbR^2$, it is sufficient to check the visible area of the origin $Q_0$. Denote $A$ as the event that the origin is not covered by a blockage. By Corollary \ref{cor:pointcover},
$
\mathbb{P}\{A\}=\mathrm{e}^{-p}.
$
If $A$ is not true, then $Q_0=0$. Thus we focus on the case when the origin is not coverd by any blockage.

Define $\{\mathbf{a}_{\psi}\}$ as the set of normalized vectors, where $\psi$ is the angle between $\mathbf{u}_{\psi}$ and $x$ axis. Define the distance to the nearest blockage along the direction $\mathbf{u}_{\psi}$ as
\begin{eqnarray}
D(\psi) = \sup\{r\in\bbR^{+}:K_{r\mathbf{u}_{\psi}}=0\}.
\end{eqnarray}

Next, by Corollary \ref{cor:free},
\begin{align*}
\bbP\{D({\psi})>R|A\}&=\frac{\bbP\{K_{R\mathbf{u}_{\psi}}=0\}}{\bbP\{A\}}\\
&=\frac{\mathrm{e}^{-(\beta R+p)}}{\mathrm{e}^{-p}}=\mathrm{e}^{-\beta R}.
\end{align*}
Namely, conditioning on $A$, $D(\psi)$ is an exponential random variable with parameter $\beta$. Moreover, the distribution of $D(\psi)$ is independent of $\psi$.
Hence given $A$, the average size of the visible area is
\begin{align*}
\bbE[V(Q_0)|A]&=\mathbb{E}\left[\int_{0}^{2\pi}\frac{D^2(\psi)}{2}\mathrm{d}\psi\right]\\
&=2\pi\int_{0}^{\infty}\frac{r^2}{2}\beta\mathrm{e}^{-\beta r}\mathrm{d}r=\frac{2\pi}{\beta}.
\end{align*}
Finally, the average size of the visible area is
\begin{align*}
\bbE[V(Q_0)]&=\bbP\{A\}\bbE[V(Q_0)|A]+(1-\bbP\{A\})\times0\\
&=\frac{2\pi\mathrm{e}^{-p}}{\beta^2}.  \tag*{\mbox{\endproof}}
\end{align*}
We define the {\it effective visible range} of a network  $R_\mathrm{eff}$ as the radius of a circle whose area equals the average visible area $\bbE[V(Q_0)]$ of the network.
\begin{corollary}
In a cellular network with impenetrable blockages, $R_\mathrm{eff}=\frac{\sqrt{2\mathrm{e}^{-p}}}{\beta}$.
\end{corollary}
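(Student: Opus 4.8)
The plan is to simply unwind the definition of the effective visible range and substitute the closed-form expression for the average visible area established in Theorem~\ref{thm:average}. By definition, $R_\mathrm{eff}$ is the radius of the disc whose area equals the average visible area $\bbE[V(Q_0)]$, so it is characterized by the identity $\pi R_\mathrm{eff}^2 = \bbE[V(Q_0)]$. First I would invoke Theorem~\ref{thm:average}, which gives $\bbE[V(Q_0)] = \frac{2\pi\mathrm{e}^{-p}}{\beta^2}$. Substituting this into the defining identity and cancelling the common factor $\pi$ yields $R_\mathrm{eff}^2 = \frac{2\mathrm{e}^{-p}}{\beta^2}$, and taking the positive square root (a radius being nonnegative) gives $R_\mathrm{eff} = \frac{\sqrt{2\mathrm{e}^{-p}}}{\beta}$.

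There is essentially no obstacle here: the claim is a one-line algebraic rearrangement of Theorem~\ref{thm:average} together with the definition of $R_\mathrm{eff}$ introduced just above it. The only point that merits a word of care is that $\bbE[V(Q_0)]$ should be read as the expected area in $\bbR^2$ of the (possibly empty) visible region — with the contribution of the event that the origin is covered by a blockage counted as zero — exactly as in the proof of Theorem~\ref{thm:average}; with that reading the identity $\pi R_\mathrm{eff}^2 = \bbE[V(Q_0)]$ holds verbatim and the remainder is arithmetic. I would present the proof in two short lines and remark that, since $p$ is small, $R_\mathrm{eff}\approx \sqrt{2}/\beta$, giving an intuitive length scale for how far a typical user can see in the impenetrable network.
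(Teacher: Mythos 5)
Your proposal is correct and follows exactly the paper's own one-line argument: substitute $\bbE[V(Q_0)]=\frac{2\pi\mathrm{e}^{-p}}{\beta^2}$ from Theorem~\ref{thm:average} into $\pi R_\mathrm{eff}^2=\bbE[V(Q_0)]$ and take the positive square root. Nothing is missing.
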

\begin{proof} By Theorem \ref{thm:average}, $R_\mathrm{eff}=\sqrt{\frac{\bbE[V(Q_0)]}{\pi}}=\frac{\sqrt{2\mathrm{e}^{-p}}}{\beta}.$
\end{proof}
Note that $R_\mathrm{eff}$ can reveal the average range that a base station can reach via line-of-sight links in a network.

The average number of base stations visible to a mobile user is derived in the following corollary.
\begin{corollary}
In cellular networks with impenetrable blockages, if the base stations form a homogeneous PPP with density $\mu$, then the average number of visible base stations to a mobile user is $\mu\bbE[V(Q_0)]=\frac{2\pi\mu\mathrm{e}^{-p}}{\beta^2}$.
\end{corollary}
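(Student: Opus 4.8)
The plan is to write the number of visible base stations as a sum of indicators over the base station point process, and then strip away the two independent sources of randomness---the buildings and the base stations---one at a time.

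First I would set $N_{\mathrm{vis}}=\sum_i \mathbf{1}\{\bX_i\in Q_0\}$, the number of base stations lying in the visible region $Q_0$ of the typical user at the origin. The structural fact I would rely on is that, by Assumptions \ref{assump:PPP} and \ref{assum:boolean}, the base station PPP $\{\bX_i\}$ is independent of the Boolean scheme of blockages $\{\bC_k,L_k,W_k,\Theta_k\}$. I would therefore condition on the whole blockage process, call it $\Phi$: once $\Phi$ is fixed, $Q_0$ is a fixed Borel subset of $\bbR^2$, and the restriction of the base station PPP to a fixed Borel set is again a Poisson process whose intensity measure is $\mu$ times Lebesgue measure. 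Hence $\bbE[N_{\mathrm{vis}}\mid \Phi]=\mu\,V(Q_0)$.

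Taking the outer expectation and using the tower property then gives $\bbE[N_{\mathrm{vis}}]=\mu\,\bbE[V(Q_0)]$, and substituting $\bbE[V(Q_0)]=\frac{2\pi\mathrm{e}^{-p}}{\beta^2}$ from Theorem \ref{thm:average} yields $\bbE[N_{\mathrm{vis}}]=\frac{2\pi\mu\mathrm{e}^{-p}}{\beta^2}$, which is the claim. The case where the origin is itself inside a building is handled automatically: then $Q_0=\emptyset$ and $V(Q_0)=0$, consistent with a covered user seeing no base station, and this weighting is already built into Theorem \ref{thm:average}.

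There is no genuinely hard step here; the only point needing care is legitimizing the interchange of expectations, i.e. checking that $Q_0$ is a well-defined random closed set measurable with respect to $\Phi$ and independent of $\{\bX_i\}$, so that the identity $\bbE[N_{\mathrm{vis}}\mid\Phi]=\mu V(Q_0)$ is valid---this is just the restriction property of the PPP combined with Fubini's theorem. One should also note in passing that a base station sitting inside a building contributes nothing, since the link $\bO\bX_i$ then crosses that building, so $K_{\bO\bX_i}\ge 1$ and $\bX_i\notin Q_0$; this is immediate from the definition of $Q_0$.
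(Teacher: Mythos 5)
Your proposal is correct and follows essentially the same route as the paper: both exploit the independence of the base station PPP and the Boolean scheme, condition on the blockage realization so that the expected count in $Q_0$ is $\mu V(Q_0)$, and then average using Theorem \ref{thm:average}. Your version is marginally more careful (conditioning on the full blockage process rather than just on the value $v=V(Q_0)$ via its density), but this is a cosmetic difference, not a different argument.
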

\begin{proof}
Note that $V(Q_0)$, the size of the visible region of the mobile user, is a random variable. We denote its probability density function as $f_V(v)$.
By Assumption \ref{assump:PPP}, the PPP of base stations is independent of the Boolean scheme of blockages. Hence, given that the size of the visible area is $v$, there are $\mu v$ visible base stations on average. Averaging over all the realizations of blockages, the average number of visible base stations is
$$ \int_{0}^{\infty}\mu v f_V(v)\mathrm{d}v=\mu\bbE[V(Q_0)].$$
Lastly, by Theorem \ref{thm:average}, it follows that $\mu\bbE[V(Q_0)]=\frac{2\pi\mu\mathrm{e}^{-p}}{\beta^2}$.
\end{proof}
Note that the average number of base stations visible to a mobile user is finite. This indicates that there are only a finite number of base stations visible to a mobile user almost surely in a cellular network with impenetrable blockages.

Next we derive the distribution of the distance to the closest visible base stations under the assumption that the number of blockages on each link is independent.
\begin{theorem}\label{thm:Rfree} Assuming that the numbers of blockages on the links are independent, then the distribution of the distance to the closest visible base station $R_0$ is
\begin{eqnarray}\label{eqn:cdfR}
\bbP\{R_{0}> x\} = \exp(-2\pi\mu U(x)),
\end{eqnarray}
 where $U(x)= \frac{\mathrm{e}^{-p}}{\beta^2}\left[1-(\beta x+1)\mathrm{e}^{-\beta x}\right]$.
\end{theorem}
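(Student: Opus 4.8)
The plan is to express the event $\{R_0 > x\}$ as a void probability of the point process of \emph{visible} base stations, and then evaluate that void probability using the independence hypothesis together with Corollary \ref{cor:free}.

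First I would set up the thinned process. Let $\Psi=\{\bX_i : K_{\bO\bX_i}=0\}$ be the set of base stations visible to the typical user, so that $R_0=\inf\{|\bO\bX|:\bX\in\Psi\}$. Since the base station PPP $\{\bX_i\}$ is independent of the Boolean scheme of blockages (Assumption \ref{assump:PPP}), and since by hypothesis the numbers of blockages $K_{\bO\bX_i}$ on the different links are mutually independent (hence, conditioned on the locations, the indicators $\mathbf{1}\{K_{\bO\bX_i}=0\}$ are independent), the set $\Psi$ is obtained from $\{\bX_i\}$ by a location-dependent independent thinning in which a base station at $\bX$ with $|\bO\bX|=r$ is retained with probability $q(r):=\bbP\{K=0\text{ on a link of length }r\}=\mathrm{e}^{-(\beta r+p)}$, the last equality being Corollary \ref{cor:free}. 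By the thinning theorem for Poisson processes, $\Psi$ is then a (non-homogeneous) PPP on $\bbR^2$ with intensity function $\bY\mapsto \mu\,q(|\bO\bY|)$.

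Next I would identify $\{R_0>x\}$ with the event that $\Psi$ has no point in the disk $B(\bO,x)$ of radius $x$ centered at the origin, and apply the void-probability formula for a PPP:
\begin{align*}
\bbP\{R_0>x\}=\bbP\{\Psi\cap B(\bO,x)=\emptyset\}=\exp\!\left(-\mu\int_{B(\bO,x)}\mathrm{e}^{-(\beta|\bO\bY|+p)}\,\mathrm{d}\bY\right).
\end{align*}
Passing to polar coordinates gives $\int_{B(\bO,x)}\mathrm{e}^{-(\beta|\bO\bY|+p)}\,\mathrm{d}\bY=2\pi\mathrm{e}^{-p}\int_0^x r\,\mathrm{e}^{-\beta r}\,\mathrm{d}r$, and one integration by parts yields $\int_0^x r\,\mathrm{e}^{-\beta r}\,\mathrm{d}r=\beta^{-2}\left[1-(\beta x+1)\mathrm{e}^{-\beta x}\right]$. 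Substituting, the exponent becomes $-2\pi\mu U(x)$ with $U(x)=\frac{\mathrm{e}^{-p}}{\beta^2}\left[1-(\beta x+1)\mathrm{e}^{-\beta x}\right]$, which is exactly (\ref{eqn:cdfR}).

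The main obstacle is the thinning step: the visibility indicators $\mathbf{1}\{K_{\bO\bX_i}=0\}$ are in reality correlated through the common blockage process (as the paper itself notes), so the reduction to an independent thinning is precisely where the stated independence assumption is consumed. I would be careful to record that the retention probability $q(r)$ depends only on the point's own location (not on the other points), so that the location-dependent thinning theorem applies and produces a genuine Poisson process whose void probabilities are given by the exponential of the intensity measure; note also that $q(r)\to\mathrm{e}^{-p}$ as $r\to0$ already folds in the event that the typical user itself is outdoors, so no extra conditioning is needed. Everything after the thinning step is a routine two-dimensional integral.
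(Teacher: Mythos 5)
Your proposal is correct and is essentially the paper's own argument: the paper computes the same void probability by conditioning on the Poisson number of base stations in $\mathcal{B}(\bO,x)$ and averaging the per-point retention probability $\mathrm{e}^{-(\beta t+p)}$ over the i.i.d. uniform locations, which is just a hand derivation of the location-dependent thinning theorem you invoke directly. The independence assumption is consumed at the same step in both versions, and the resulting integral $2\pi\mu\mathrm{e}^{-p}\int_0^x t\,\mathrm{e}^{-\beta t}\,\mathrm{d}t=2\pi\mu U(x)$ is identical.
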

\proof
Suppose that the mobile user is located at the origin. The distance to the nearest visible BS $R_0$ is larger than $x$ if and only if all the base stations located within the ball $\mathcal{B}(\bO,x)$ are not visible to the user. Since the base stations form a PPP of density $\mu$, it follows that
\begin{align*}
\bbP\{R_{0}> x\}&=\mathbb{P}\{\mbox{all base stations in } \mathcal{B}(\bO,x) \mbox { are not visible} \}\\
&=\sum_{i=0}^{\infty}\left[\int^x_0\left(1-\mathrm{e}^{-(\beta t+p)}\right)\frac{2t}{x^2}\mathrm{d}t\right]^i\frac{\mathrm{e}^{-\mu \pi x^2}(\mu \pi x^2)^i}{i!}\\
&=\sum_{i=0}^{\infty}\left(1-\frac{2U(x)}{x^2}\right)^i\frac{\mathrm{e}^{-\mu \pi x^2}(\mu \pi x^2)^i}{i!}\\
&=\mathrm{e}^{-\mu\pi x^2\left[1-\left(1-\frac{2U\left(x\right)}{x^2}\right)\right]}\\
&=\mathrm{e}^{-2\mu\pi U(x)}. \tag*{\mbox{\endproof}}
\end{align*}
One direct corollary of Theorem \ref{thm:Rfree} follows by differentiating  (\ref{eqn:cdfR}).
\begin{corollary}\label{rfreepdf}
The probability density function of $R_0$ is
$f_{R_0}(x) = 2\pi\mu x\mathrm{e}^{-(\beta x+p+2\pi\mu U(x))}$.
\end{corollary}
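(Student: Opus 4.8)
The plan is to obtain the density by differentiating the cumulative distribution function implied by Theorem~\ref{thm:Rfree}. Since $\bbP\{R_0 > x\} = \exp(-2\pi\mu U(x))$, the CDF is $F_{R_0}(x) = 1 - \exp(-2\pi\mu U(x))$, and so
$f_{R_0}(x) = \frac{\mathrm{d}}{\mathrm{d}x}F_{R_0}(x) = 2\pi\mu\, U'(x)\,\exp(-2\pi\mu U(x))$,
valid for $x \ge 0$ because $U$ is differentiable there (it is smooth in $x$, being a polynomial times $\mathrm{e}^{-\beta x}$ plus a constant). The only real computation is evaluating $U'(x)$.

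Next I would differentiate $U(x) = \frac{\mathrm{e}^{-p}}{\beta^2}\left[1 - (\beta x + 1)\mathrm{e}^{-\beta x}\right]$. Using the product rule on $(\beta x + 1)\mathrm{e}^{-\beta x}$, the $\beta\mathrm{e}^{-\beta x}$ term from differentiating the linear factor cancels against the $\beta\mathrm{e}^{-\beta x}$ piece of $-\beta(\beta x + 1)\mathrm{e}^{-\beta x}$, leaving $\frac{\mathrm{d}}{\mathrm{d}x}\left[(\beta x + 1)\mathrm{e}^{-\beta x}\right] = -\beta^2 x\,\mathrm{e}^{-\beta x}$. Hence $U'(x) = \frac{\mathrm{e}^{-p}}{\beta^2}\cdot\beta^2 x\,\mathrm{e}^{-\beta x} = x\,\mathrm{e}^{-(\beta x + p)}$.

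Substituting this back gives $f_{R_0}(x) = 2\pi\mu x\,\mathrm{e}^{-(\beta x + p)}\exp(-2\pi\mu U(x)) = 2\pi\mu x\,\mathrm{e}^{-(\beta x + p + 2\pi\mu U(x))}$, which is the claimed expression. There is no genuine obstacle here; the statement is a direct corollary of Theorem~\ref{thm:Rfree}, and the only point that deserves a line of care is the cancellation in the derivative of $U$, together with noting that the resulting $f_{R_0}$ integrates to one since $U(0) = 0$ and $U(\infty) = \mathrm{e}^{-p}/\beta^2 < \infty$, so that $F_{R_0}$ ranges over $[0,1)$ with a possible atom at infinity reflecting the event that no visible base station exists.
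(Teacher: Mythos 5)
Your proposal is correct and follows exactly the paper's route: the paper obtains this corollary by differentiating the CDF from Theorem~\ref{thm:Rfree}, which is precisely what you do, and your computation $U'(x) = x\,\mathrm{e}^{-(\beta x + p)}$ is right. Your closing remark about the defective distribution (the atom at infinity, consistent with Corollary~\ref{cor:silent}) is a correct and worthwhile observation that the paper leaves implicit.
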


In a network with impenetrable blockages, a mobile user can only be  served by the base stations in his visible area. If there happens to be no base stations located inside its visible region, the mobile user will receive no signals. We define a {\it silent} area of an impenetrable network as the set of all locations that have no visible base stations. Denote $\xi$ as the ratio of the silent area to the total area of a network. We estimate the value of $\xi$ in the following corollary.
\begin{corollary} \label{cor:silent}
Under the independent link assumption, the ratio of the silent area to the total area in a network with impenetrable blockages is
\begin{eqnarray}
\xi=1-\bbP(R_0<\infty) = \exp\left(-\frac{2\mu\pi\mathrm{e}^{-p}}{\beta^2}\right).\label{eqn:silentratio}
\end{eqnarray}
\end{corollary}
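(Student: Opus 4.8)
The plan is to reduce the area ratio $\xi$ to a probability evaluated at the typical user, and then obtain that probability from Theorem~\ref{thm:Rfree} by sending the distance threshold to infinity.

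First I would make precise what $\xi$ means. For a growing family of observation windows $\mathcal{W}_n$ (say balls of radius $n$ centred at $\bO$), $\xi=\lim_{n\to\infty}V\!\left(\text{silent area}\cap\mathcal{W}_n\right)\big/V(\mathcal{W}_n)$. The indicator field $\bY\mapsto\mathbf{1}\{\bY\text{ is silent}\}$ is a deterministic functional of the Boolean scheme of rectangles together with the base-station PPP; since both driving processes are stationary, and their independent superposition is stationary and ergodic (being constructed from independent homogeneous Poisson processes, which are mixing), this indicator field is stationary and ergodic. The spatial ergodic theorem then gives $\xi=\bbP\{\bO\text{ is silent}\}$ almost surely. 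By definition a location is silent precisely when its visible region contains no base station; for the typical user this is exactly the event $\{R_0=\infty\}$ --- including the case in which $\bO$ itself is covered by a blockage, where $Q_{\bO}=\emptyset$ and, consistently, $R_0=\infty$. Hence $\xi=\bbP\{R_0=\infty\}=1-\bbP\{R_0<\infty\}$.

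Next I would evaluate $\bbP\{R_0=\infty\}$. The events $\{R_0>x\}$ decrease to $\{R_0=\infty\}$ as $x\uparrow\infty$, so continuity of probability from above yields $\bbP\{R_0=\infty\}=\lim_{x\to\infty}\bbP\{R_0>x\}$. By Theorem~\ref{thm:Rfree}, $\bbP\{R_0>x\}=\exp\!\left(-2\pi\mu U(x)\right)$ with $U(x)=\frac{\mathrm{e}^{-p}}{\beta^2}\left[1-(\beta x+1)\mathrm{e}^{-\beta x}\right]$. Since $(\beta x+1)\mathrm{e}^{-\beta x}\to 0$ as $x\to\infty$, we get $U(x)\to\mathrm{e}^{-p}/\beta^2$, and therefore $\xi=\exp\!\left(-2\pi\mu\mathrm{e}^{-p}/\beta^2\right)$, as claimed.

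The only genuinely delicate step is the first one: the identification of the spatial density of silent locations with the probability that the origin is silent. This relies on an ergodicity property of the marked Poisson model. I would either cite the standard spatial ergodic theorem for stationary ergodic random fields, or argue more elementarily that by Fubini and stationarity $\bbE\!\left[V(\text{silent area}\cap\mathcal{W}_n)\right]=V(\mathcal{W}_n)\,\bbP\{\bO\text{ silent}\}$, and then a second-moment estimate exploiting the fast decay of spatial correlations in the Boolean/PPP model gives convergence of the ratio to this mean. Everything after that is the elementary limit computation above. As in Theorem~\ref{thm:Rfree}, the statement is understood under the approximation that the numbers of blockages on distinct links are independent, so the same caveat carries over here.
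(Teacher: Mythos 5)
Your proof is correct and follows essentially the same route as the paper's: identify $\xi$ with $\bbP\{R_0=\infty\}$ by stationarity, then compute $\lim_{x\to\infty}\bbP\{R_0>x\}$ from Theorem~\ref{thm:Rfree} using $U(x)\to\mathrm{e}^{-p}/\beta^2$. Your explicit appeal to ergodicity (or the Fubini/second-moment argument) to justify equating the spatial area fraction with the probability at the origin is a more careful version of the paper's one-line invocation of stationarity, but it is not a different approach.
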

\proof
Since the base stations and blockage processes are stationary on the plane, the ratio of the silent area to the total area of a network equals the probability that a user at the origin has no visible base stations, which also equals the probability that the closest visible base station to the origin is infinitely far away. It follows that
\begin{align*}
\xi&=1-\bbP\{R_0<\infty\}=\lim_{x\rightarrow \infty} \bbP\{R_0>x\}\\
&= \exp\left[-2\mu\pi\mathrm{e}^{-p}\lim_{x\rightarrow \infty}\left(\frac{1}{\beta^2}\left(1-\mathrm{e}^{-\beta x}(1+x)\right)\right)\right]\\
&= \exp \left(\frac{-2\mu\pi\mathrm{e}^{-p}}{\beta^2}\right). \tag*{\mbox{\endproof}}
\end{align*}
\begin{remark}
The parameter $\xi$ illustrates the level of connectivity in a network. When $\xi$ increases, users become less likely to get connected in the network. Substituting $\beta=\frac{2\lambda(\bbE[L]+\bbE[W])}{\pi}$ for (\ref{eqn:silentratio}), it follows that
\begin{eqnarray*}
\xi=1- \exp\left(\frac{-\pi^3\mu}{2\lambda^2\mathrm{e}^{-\lambda\bbE[W]\bbE[L]}\left(\bbE[L]+\bbE[W])\right)^2}\right).
\end{eqnarray*}
This indicates that to keep the connectivity fixed, the density $\mu$ of base stations should scale superlinearly with the blockage density $\lambda$. Specifically when $p$ is small, $\mu$ should scale with $\lambda^2$ approximately.
\end{remark}

\subsection{Coverage Probability}
The coverage probability is an important performance metric in a cellular network. In an interference-limited network, the coverage probability is defined as
\begin{eqnarray}
P_c(T)=\bbP\{\mbox{SIR}>T\},
\end{eqnarray}
where $T>0$ is the SIR threshold for successful decoding at the receiver. If the base station process is stationary on the plane, the coverage probability can be also interpreted as the percentage of the area in a network where the received SIR is higher than $T$.

While not expressed explicitly, $P_c(T)$ is a function of base station density $\mu$, blockage density $\lambda$, and other statistics of blockages, such as $\bbE[L]$ and $\bbE[W]$. More importantly, $P_c$ also depends on the connecting strategy of the user. For instance, a mobile user can connect to the base station that provides maximum SIR or the strongest signal power. For tractability in the analysis, we assume the mobile user always connects to the {\it nearest} visible base station (if one is visible). We also assume that each link has an independent number of blockages. In this case, the coverage probability can be computed through the following theorem.
\begin{theorem}\label{thm:cover}
If the user connects to the nearest visible base station, the coverage probability $P_c(T)$ is
\begin{align}
P_{c}(T)=\int_{0^{+}}^{\infty}\exp\left(-2\pi\mu\int_{x}^{\infty}\left[\frac{Tx^\alpha\mathrm{e}^{-(\beta t+p)}}{t^{\alpha}+Tx^\alpha}\right]t\mathrm{d}t\right)f_{R_0}(x)\mathrm{d}x,
\end{align}
where $f_{R_0}(x)$ is the probability density function of the distance to the nearest visible base station derived in Corollary \ref{rfreepdf}, $\alpha$ is the path loss exponent as in (\ref{powerloss}).
\end{theorem}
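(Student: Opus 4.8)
The plan is to follow the standard conditioning argument for coverage probability in Poisson cellular networks, adapted to the impenetrable-blockage setting. First I would condition on the distance to the serving base station, $R_0 = x$. Since this base station is visible, in the impenetrable case it has zero blockages on its link, so by (\ref{eqn:SIR}) the useful signal power is proportional to $g_0 x^{-\alpha}$; meanwhile every non-visible base station contributes exactly zero interference (its signal is completely blocked, $\gamma^{K_\ell} = 0$), so the aggregate interference reduces to $I = \sum_{\ell} g_\ell R_\ell^{-\alpha}$ taken only over the visible interferers, all of which lie at distance larger than $x$ by definition of $R_0$. Because the fading $g_0 \sim \mathrm{Exp}(1)$ is independent of $R_0$ and of $I$, I can write
$$\bbP\{\mathrm{SIR} > T \mid R_0 = x\} = \bbP\{g_0 > T x^\alpha I \mid R_0 = x\} = \bbE\bigl[\mathrm{e}^{-T x^\alpha I} \mid R_0 = x\bigr] = \mathbb{L}_I(T x^\alpha \mid R_0 = x),$$
so the problem reduces to computing the Laplace transform of the conditional interference.

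Next I would identify the conditional law of the visible interferers. Under the independent-link assumption, the events that distinct base stations are visible to the origin are mutually independent, and a base station at distance $t$ from the origin is visible with probability $\mathrm{e}^{-(\beta t + p)}$ by Corollary~\ref{cor:free}. Hence the point process of visible base stations is an independent thinning of the PPP $\{\bX_i\}$ with retention probability $\mathrm{e}^{-(\beta |\bX_i| + p)}$, i.e.\ an inhomogeneous PPP of intensity $\mu\,\mathrm{e}^{-(\beta r + p)}$ with respect to planar Lebesgue measure. Conditioning on $R_0 = x$ amounts to conditioning this thinned process to have its nearest point at distance $x$: by the void property of the Poisson process on $\mathcal{B}(\bO,x)$, its independence on the disjoint exterior region, and Slivnyak's theorem for adding the point at distance $x$, the remaining visible base stations still form an inhomogeneous PPP on $\bbR^2 \setminus \mathcal{B}(\bO, x)$ with the same intensity $\mu\,\mathrm{e}^{-(\beta r + p)}$, independent of $g_0$. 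This conditioning step is the main obstacle: it is precisely where the independent-link approximation is indispensable, since without it the blockage counts (hence the visibility indicators) of different links are correlated, the set of visible base stations is no longer an independent thinning, and neither the product form of the Laplace transform nor the exterior-PPP description would hold.

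Finally I would evaluate the conditional Laplace transform via the probability generating functional of the PPP (equivalently, Campbell's theorem):
$$\mathbb{L}_I(s \mid R_0 = x) = \exp\left(-\mu \int_{\bbR^2 \setminus \mathcal{B}(\bO,x)} \left(1 - \bbE_g\bigl[\mathrm{e}^{-s g |\bz|^{-\alpha}}\bigr]\right) \mathrm{e}^{-(\beta |\bz| + p)} \, \mathrm{d}\bz\right).$$
Passing to polar coordinates and using $\bbE_g[\mathrm{e}^{-s g t^{-\alpha}}] = (1 + s t^{-\alpha})^{-1}$ for $g \sim \mathrm{Exp}(1)$, so that $1 - \bbE_g[\mathrm{e}^{-s g t^{-\alpha}}] = s/(t^\alpha + s)$, and then setting $s = T x^\alpha$ gives
$$\mathbb{L}_I(T x^\alpha \mid R_0 = x) = \exp\left(-2\pi\mu \int_x^\infty \frac{T x^\alpha \mathrm{e}^{-(\beta t + p)}}{t^\alpha + T x^\alpha}\, t \, \mathrm{d}t\right).$$
Deconditioning over $R_0$ against the density $f_{R_0}$ from Corollary~\ref{rfreepdf}, namely $P_c(T) = \int_{0^+}^\infty \mathbb{L}_I(T x^\alpha \mid R_0 = x)\, f_{R_0}(x)\, \mathrm{d}x$, then produces exactly the claimed formula.
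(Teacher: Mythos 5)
Your proof is correct and follows essentially the same route as the paper: condition on $R_0=x$, use the exponential fading to reduce the conditional coverage probability to the Laplace transform of the interference, evaluate it via the Poisson probability generating functional under the independent-link assumption, and decondition against $f_{R_0}$. The only cosmetic difference is that you absorb the visibility randomness into an independent thinning (an inhomogeneous PPP of intensity $\mu\,\mathrm{e}^{-(\beta r+p)}$), whereas the paper keeps it as independent Bernoulli marks $S_i$ inside the per-point expectation; the two computations yield the same per-point factor $1-\frac{Tx^{\alpha}\mathrm{e}^{-(\beta t+p)}}{t^{\alpha}+Tx^{\alpha}}$ and hence the same result.
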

\proof
First we compute the expression of coverage probability conditioning on the distance to the nearest visible base station $R_0$. Given that $R_0=x$, the expression of SIR is
\begin{eqnarray*}
\mbox{SIR}=\frac{x^{-\alpha}g_0}{\sum_{i:R_i>x}R_i^{-\alpha}g_i S_i},
\end{eqnarray*}
where $g_i$ are i.i.d. exponential random variables with mean 1, $S_i$ (in the impenetrable case) are independent Bernoulli random variables with parameter $\mathrm{e}^{-(\beta R_i+p)}$. The conditional coverage probability follows as
\begin{align*}
\bbP\{\mbox{SIR}>T|R_0=x\}&=\bbP\{g_0>Tx^\alpha\sum_{i:R_i>x}R_i^{-\alpha}g_i S_i\}\\
&=\bbE\left[\exp\left(-Tx^{\alpha}\sum_{i:R_i>x}R_i^{-\alpha}g_i S_i\right)\right]\\
&=\bbE\left[\prod_{i:R_i>x}\bbE_{S_i,g_i}\left[\exp\left(-Tx^{\alpha}R_i^{-\alpha}g_i S_i\right)\right]\right]\\
&=\bbE\left(\prod_{i:R_i>x}\bbE_{g_{i}}\left[\exp(-Tx^{\alpha}g_{i}R_{i}^{-\alpha})\right]\mathrm{e}^{-(\beta R_{i}+p)}+1-\mathrm{e}^{-(\beta R_i+p)}\right)\\
&=\bbE\left(\prod_{i:R_i>x}1-\frac{Tx^{\alpha}\mathrm{e}^{-(\beta R+p)}}{R_i^{\alpha}+Tx^{\alpha}}\right)\\
&\stackrel{(e)}{=}\exp\left(-2\pi\mu\int_{x}^{\infty}\left[\frac{Tx^\alpha \mathrm{e}^{-(\beta t+p)}}{t^{\alpha}+Tx^\alpha}\right]t\mathrm{d}t\right),
\end{align*}
where $(e)$ follows directly from computing Laplace functional of the PPP formed by the base stations $\{X_i\}$ \cite{Baccelli2009}. The unconditional coverage probability follows as
\begin{align*}
P_c(T)
&=\int_{x>0}\bbP(\mbox{SIR}>T|R_0=x)f_{R_0}(x)\mathrm{d}x\\
&=\int_{0^{+}}^{\infty}\exp\left(-2\pi\mu\int_{x}^{\infty}\left[\frac{Tx^\alpha\mathrm{e}^{-(\beta t+p)}}{t^{\alpha}+Tx^\alpha}\right]t\mathrm{d}t\right)f_{R_0}(x)\mathrm{d}x.\tag*{\mbox{\endproof}}
\end{align*}
\begin{remark}
By Theorem \ref{thm:cover}, the coverage probability is not invariant with base station density in a interference-limited network with blockages, while it was shown that, without considering blockage effects, the coverage probability is independent of the base station density in \cite{Andrews2011}. In brief, in a network without blockages, the strengths of the desired signal and interference scale by a common factor when altering the base station density. In a network with blockages, however, the scale factors are different for signal and interference power, for interference links are expected to have more blockages on average.
\end{remark}
\subsection{Average Achievable Rate}
The average achievable rate is another important performance metric in a wireless system. It gives an upper bound of the average data rate that a cellular network can support. The average achievable rate $\tau$ is defined as
\begin{eqnarray}
\tau=\bbE[\log_2(1+\mbox{SIR})].
\end{eqnarray}

By \cite{Andrews2011}, given the coverage probability $P_c(T)$ in a cellular network with impenetrable blockages, the average achievable rate $\tau$ is
\begin{eqnarray}\label{thm:capacity}
\tau=\frac{1}{\ln 2}\int_{T>0}\frac{P_c(T)}{T+1}\mathrm{d}{T}.
\end{eqnarray}
It is possible that there is exactly one base station located in the visible area of a mobile user, though the possibility is small. In this case, the mobile user will not receive interference, which renders the SIR infinite. Hence, the convergence of (\ref{thm:capacity}) is very slow if it converges. In reality, even in absence of the thermal noise and interference, the maximum achievable SINR is limited by the distortion introduced by the RF imperfection, which is measured by the error vector magnitude (EVM) \cite{sesia2009}. For instance, the highest EVM requirement for transmitters in a LTE system is 0.08 \cite{3GPPTechnical}, which approximately sets the maximum possible SINR to be $10\log_{10}\frac{1}{\mathrm{EVM}^2}=50$ dB \cite{Gharaibeh2004}. Hence, for the purpose of rate calculation, it is reasonable to assume that the operating SIR is upper bounded by a threshold $T_{\max}$:
\begin{eqnarray}\label{SIR}
\mbox{SIR}=\min\left\{\frac{x^{-\alpha}g_0}{\sum_{i:R_i>x}R_i^{-\alpha}g_i S_i},T_{\max}\right\}.
 \end{eqnarray}
With the refined SIR expression in (\ref{SIR}), the average rate can be evaluated as
\begin{eqnarray}
\tau=\frac{1}{\ln2}\int_{0}^{T_{\max}}\frac{P_c(t)}{t+1}\mathrm{d}t.
\end{eqnarray}
\section{Simulation Results} \label{sec:simu}
In this section, we present numerical results that compare our proposed model with prior work, and illustrate how blockage effects impact the network performance.
\begin{figure} [!ht]
\centerline{
\includegraphics[width=0.6\columnwidth]{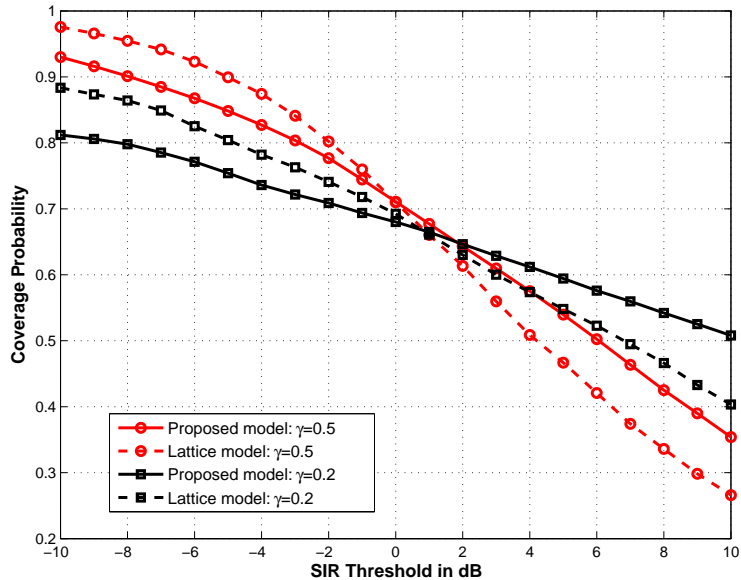}
   }
\caption{Comparison between the proposed Boolean scheme model and the lattice model. All curves are drawn according to Monte Carlo simulations, where $p=0.3$, $\bbE[L]=\bbE[W]=15$ meters. The ratio of penetration loss per building $\gamma$ is assumed to be constant. In the Boolean scheme model, we assume the lengths and widths of the blockages are i.i.d. uniformly distributed. Difference in the SIR distributions is observed when using different blockage models, which indicates that the distributions of the blockage orientation and size affect the network performance.}\label{fig:comparison}
\end{figure}

\begin{figure} [!ht]
\centerline{
\includegraphics[width=0.9\columnwidth]{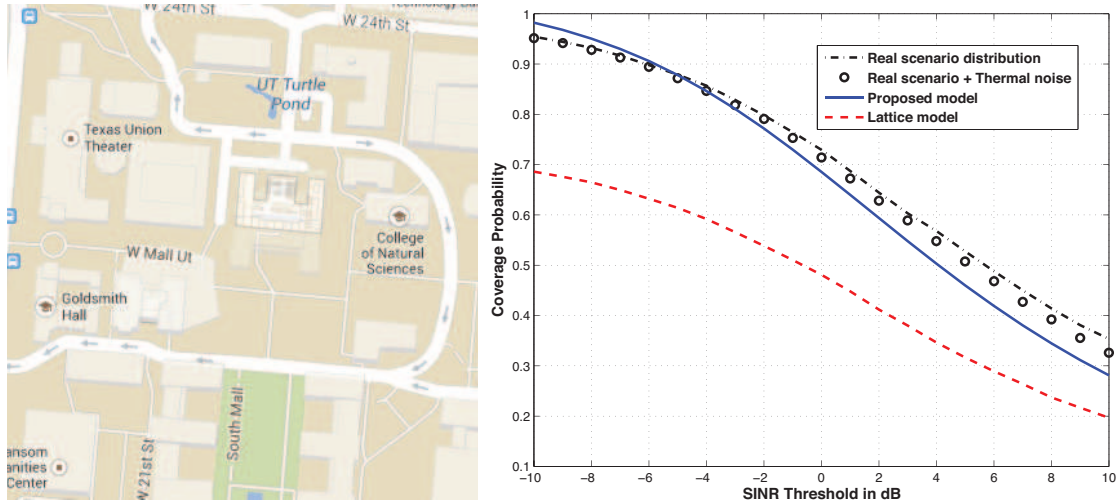}
   }
\caption{Comparison between the proposed blockage model and the real building distribution on the campus of The University of Texas at Austin. The snapshot of the campus is taken from Google map. We simulate a square area of $400\times 400$ $\mbox{m}^2$. We assume the buildings are impenetrable. In all simulations, the base stations are distributed according to a PPP. The average cell radius is 50 m. The network is assumed to be operated at $3.5$ GHz with a system bandwidth of $50$ MHz. The transmit power $P_\mathrm{t}$ is $30$ dBm. The thermal noise is not considered in the simulations unless specified. In the Boolean scheme model, we assume that $\mathbb{E}[L]=55$ m, $\mathbb{E}W=52$ m, and $p=0.266$, which match the real statistics of the area. To make a fair comparison, the rectangle site is of  the size $\bbE[L]\times\bbE[W]$, and occupied by a building with probability $p$ in the lattice model. }\label{fig:real_simu}
\end{figure}

\begin{figure} [!ht]
\centerline{
\includegraphics[width=0.6\columnwidth]{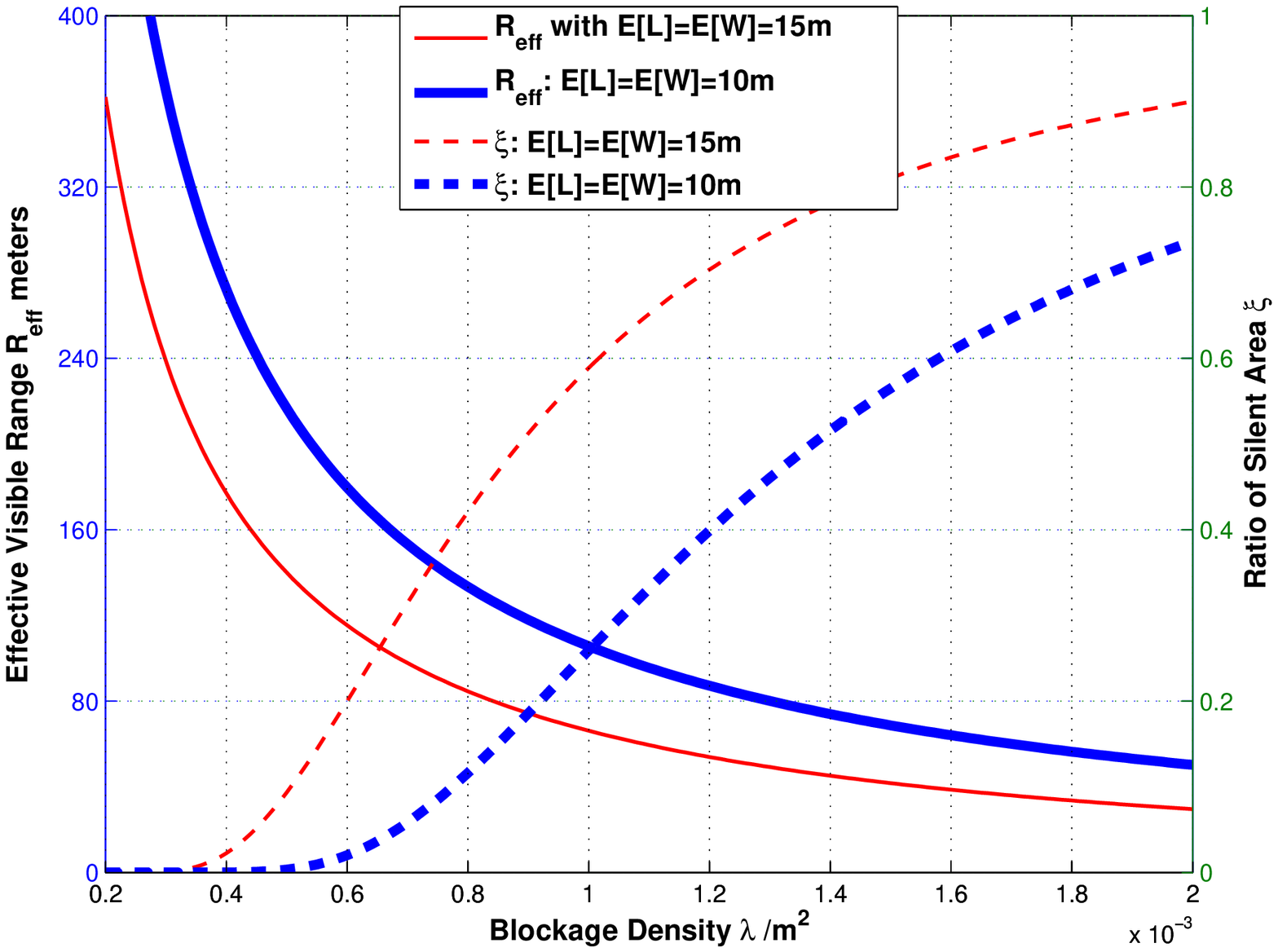}
   }
\caption{The network connectivity as a function of blockage density, in terms of the average visible area and the fraction of the silent area. The density of base station is $\mu_0=3.85\times10^{-5}$, where the average cell radius is 100 meters. When the density and the size of blockages increase, the visible area shrinks as the size of the silent area increases.}\label{fig:visible}
\end{figure}

First, we compare the simulation results based on the proposed Boolean scheme model and the lattice model in \cite{Franceschetti1999}. The original lattice model proposed in \cite{Franceschetti1999} assumes a random lattice of unit squares, each of which is occupied by a blockage with some probability. To make a fair comparison, we extend the model to a lattice made up of rectangle sites of size $\bbE[L]\times\bbE[W]$. Each site is occupied by a blockage with probability $p$, where $\bbE[L]$, $\bbE[W]$ and $p$ are parameters in the proposed Boolean scheme model, such that the average number of blockages on a link is the same in both the Boolean scheme model and the lattice model.

In \figref{fig:comparison}, we compare the simulated SIR distribution for both models with equivalent parameters. Though the average number of blockages in each link is identical, the SIR distributions are different due to the fact that the Boolean scheme model allows the blockage orientation and size to be random. Hence we conclude that the randomness of size and orientation is a differentiating feature of our model.

In \figref{fig:real_simu}, we compare the analytical blockage models with the building distribution in a real scenario. We take a 400 $\times$ 400 $\mathrm{m}^2$ snapshot of The University of Texas at Austin campus from Google map. In this area, the buildings take up 26.6$\%$ of the total size, and their average size is $55\times 52$ $\mbox{m}^2$. The parameters of the proposed Boolean scheme model and the lattice model are taken to match the statistics of the real buildings. We assume the base stations form a PPP with an average cell radius of $50$ m. We simulate the coverage probability of an outdoor user located at the center of the area. It is shown in \figref{fig:real_simu} that ignoring thermal noise causes minor errors in the simulation, for the dense deployment of base stations renders the network interference-limited. Simulation results also show that the Boolean scheme model may fit the real scenarios better than the lattice model, for it can incorporate the random size and location of the buildings.

Next, we consider how impenetrable blockages impact cellular networks. Simulations of the network connectivity, in terms of the average visible area and the fraction of the silent area in a network are illustrated in Fig. \ref{fig:visible} as a function of the density and size of the blockages. Results in Fig. \ref{fig:visible} indicate that the existence of blockages will limit the range of line-of-sight links, and increase the size of the silent area in a network. Though the existence of the blockages decreases the network connectivity, however, it may be helpful to increase the coverage probability and achievable rate as shown in the following simulations.

\begin{figure} [!ht]
\centerline{
\includegraphics[width=0.6\columnwidth]{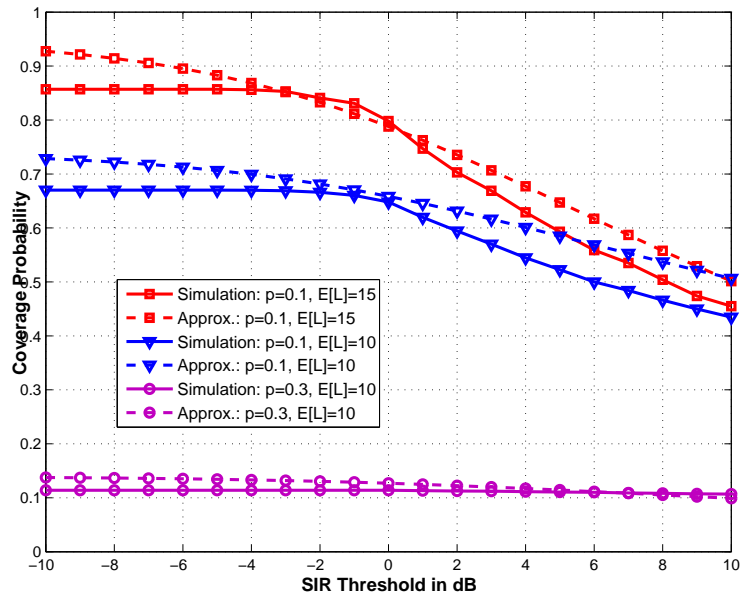}
   }
\caption{Comparison between analytical and numerical results via Monte Carlo simulation. We assume $W_i$ and $L_i$ are i.i.d. uniformly distributed in the simulations. The error is small in the practically relevant SIR regime between -5dB and 5dB.}\label{fig:coveragesim}
\end{figure}
We present a comparison of the coverage probability between Monte Carlo simulations and the analytical results from Section \ref{apply} in Fig. \ref{fig:coveragesim}. The minor difference between the curves are caused by neglecting correlations of blockage effects on different links in the analysis. The small error seems to be a reasonable compromise between fidelity and simulation time, since it takes more than 3 hours to run a Monte Carlo simulation of 10000 samples, and less than 1 minute to evaluate the coverage using the analytical expression.

\begin{figure} [!ht]
\centerline{
\includegraphics[width=0.6\columnwidth]{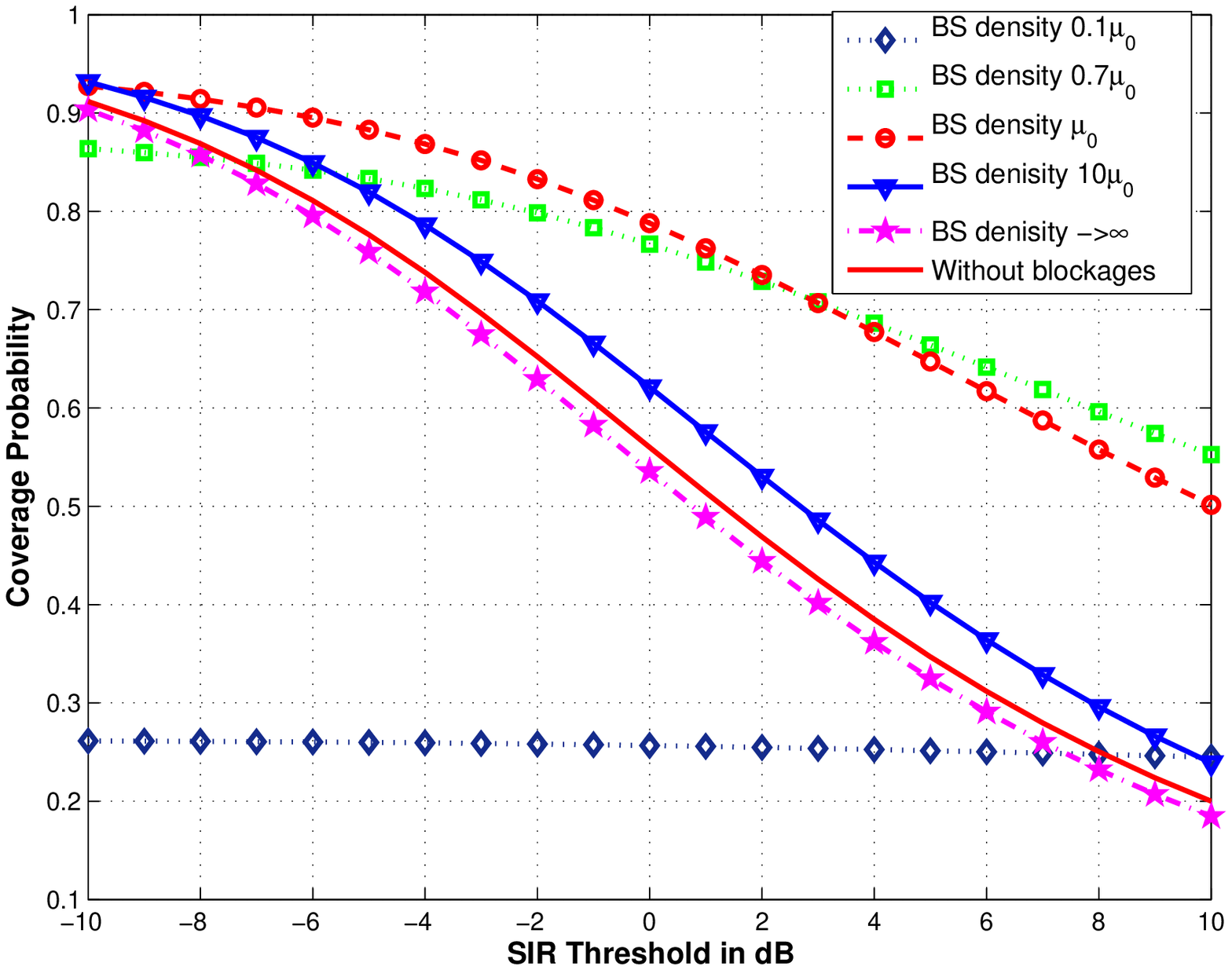}
   }
\caption{Coverage probability as a function of the base station density. Coverage probabilities are computed through the analytical expression in Theorem \ref{thm:cover}, where $\mu_0=3.85\times10^{-5}/\mbox{m}^{2}.$ The baseline curve with no blockages is derived according to \cite{Andrews2011}. When the base station density is $\mu_0$, a remarkable performance gain in coverage probability is observed over the baseline curve, which indicates blockages sometimes help to improve performance of urban networks.}\label{fig:coverage}
\end{figure}
The coverage probability is compared as a function of base station density, with and without blockages in \figref{fig:coverage}. When blockages are not considered in the system model, prior work in \cite{Andrews2011, Dhillon2012} concluded that the coverage probability is invariant with the base station density. When blockages are included, however, the SIR distribution becomes dependent on the base station density, as shown in \figref{fig:coverage}. An interesting result is that blockages often help improve coverage. This apparently counter-intuitive result can be explained by the distance-dependent behavior of blockage. Since we assume the user always connects to the nearest available base station, the interference link is always longer than the desired-signal link. Thus it is likely that there are more blockages on the interference link. Then a large portion of interference power will be blocked from the mobile user than that of the signal power. Another interesting result is the coverage probability as a function of different base station densities $\mu$ and 10$\mu$: increasing the number of base stations deployed in a region, sometimes leads to a degradation of performance. This simulation result indicates that, with a fixed blockage density, the optimum base station density to achieve best network performance is finite. Another observation is that when the base station density goes to infinity, the coverage probability converges to the case with no blockages. Since the locations of the base stations and buildings form two independent homogeneous PPPs, increasing base station density to infinity with a fixed blockage density is equivalent to decreasing the blockages density to zero with some base station density.
\begin{table}
 \begin{center}
\caption{Average Rate Comparison With $T_{\max}=40dB$}\label{table:rate}
\begin{tabular}{c|c|c|c|c }
Blockage density & None & Low & Intermediate & High \\ \hline
\hline
    \mbox{$\lambda/ \mu$} & 0 & $0.1 \lambda_0/\mu_0$ & $\lambda_0/\mu_0$ &  $10\lambda_0/\mu_0$\\ \hline 
    \mbox{Average rate (bits/sec/Hz)} &2.15 & 2.42 & 4.99& 3.14\\ \hline
    \end{tabular}
    \end{center}
    \footnotesize{{\bf Note:} $\lambda/\mu$ is the ratio of the blockage density to the base station density. We assume $\bbE[L]=\bbE[W]=15 m$, $\lambda_0=4.4\times 10^{-4}/\mbox{m}^{2}$, and $\mu_0=3.85\times 10^{-5}/\mbox{m}^{2}$. }
  \end{table}
  
Simulation results of the average rate provide similar insight that, considering blockage effects, the average rate is no longer invariant with base station density, and that blockages may help increase the achievable rate, as shown in Table \ref{table:rate}.
\section{Conclusions} \label{conclu}
In this paper, we proposed a stochastic framework to model random blockages in urban cellular networks using  concepts from random shape theory. The key idea is to model the random buildings as a Boolean scheme of rectangles, which allows for a comprehensive characterization of the randomness of the blockages, such as sizes, locations, orientations, and heights. Based on the blockage model, we derived the distribution of the penetration loss in a link, and proposed path loss formulas that incorporate the blockage effect in different scenarios. The proposed model captured the distance-dependent feature of the blockage effects. Analysis of the performance in cellular networks with impenetrable blockages indicates that blockages change the behavior of cellular networks in an important way, such as reducing the network connectivity and removing the invariance with base station density in terms of SIR distribution. Our results illustrate that cellular networks may benefit from blockages, since the longer path lengths to interfering base stations may have more blockages. For future work, it would be interesting to extend these results to multi-tier networks. It would also be interesting to consider the case where the infrastructure is deployed with some relationship to the blockages, for example small cells could be deployed inside blockages or base stations could be deployed on the perimeters of blockages. Further refinements of the proposed model would be useful, especially to consider reflections which are an important contributor to coverage in urban areas.

\end{document}